\newcommand{\ubar}[1]{\text{\b{$#1$}}}
\newcommand{\E}{\mathbb{E}}
\newcommand{\reals}{\mathbb{R}}
\newcommand{\Var}{\mathrm{Var}}
\newtheorem{definition}{Definition}
\newtheorem{theorem}{Theorem}
\newtheorem{claim}{Claim}
\newtheorem{lemma}{Lemma}
\newtheorem{corollary}{Corollary}
\title{Downstream Effects of Affirmative Action}
\author[1]{Sampath Kannan\thanks{Supported in part by NSF grant AF-1763307 and a grant from the Quattrone Center for the Fair Administration of Justice}}
\author[1]{Aaron Roth\thanks{Supported in part by NSF grants CNS-1253345, AF-1763307, and a grant from the Quattrone Center for the Fair Administration of Justice}}
\author[2]{Juba Ziani\thanks{Supported in part by NSF grants CNS-1331343 and CNS-1518941, the US-Israel Binational Science Foundation grant 2012348, and the Linde Graduate
Fellowship at Caltech.}}
\affil[1]{University of Pennsylvania}
\affil[2]{California Institute of Technology}
\begin{document}
\maketitle

\begin{abstract}
We study a two-stage model, in which students are 1) admitted to college on the basis of an entrance exam which is a noisy signal about their qualifications (type), and then 2) those students who were admitted to college can be hired by an employer as a function of their college grades, which are an independently drawn noisy signal of their type. Students are drawn from one of two populations, which might have different type distributions. We assume that the employer at the end of the pipeline is rational, in the sense that it computes a posterior distribution on student type conditional on all information that it has available (college admissions, grades, and group membership), and makes a decision based on posterior expectation. We then study what kinds of fairness goals can be achieved by the college by setting its admissions rule and grading policy. For example, the college might have the goal of guaranteeing equal opportunity across populations: that the probability of passing through the pipeline and being hired by the employer should be independent of group membership, conditioned on type. Alternately, the college might have the goal of incentivizing the employer to have a group blind hiring rule. We show that both goals can be achieved when the college does not report grades. On the other hand, we show that under reasonable conditions, these goals are impossible to achieve even in isolation when the college uses an (even minimally) informative grading policy.
\end{abstract}

%\jz{Comment out the package fullpage once we're done, we are not supposed to have page numbering. Yeah that's weird.}
\section{Introduction}

For a variety of reasons, including unequal access to primary education, family support, and enrichment activities, different demographic groups can vary widely in their level of preparation by the time they reach their senior year of high school, when they apply for college. In an attempt to correct for this unfortunate reality, many colleges in the United States follow some sort of affirmative action policy in their admissions, which is to say, their admissions decisions explicitly take demographics into account. What is often unstated (and perhaps not even explicitly considered by the colleges) is what exactly the long term goals of these policies are, beyond the short term goal of having a diverse freshman class. In this paper, we consider two explicit goals, and study the extent to which they can be met in a simple two stage model:
\begin{enumerate}
\item \textbf{Equal opportunity}: The probability that an individual is accepted to college \emph{and then} ultimately hired by an employer may depend on an individual's type, but conditioned on their type, should not depend on their demographic group.
\item \textbf{Elimination of Downstream Bias}: Rational employers selecting employees from the college population should not make hiring decisions based on group membership.
\end{enumerate}

Neither of these desiderata will necessarily be achieved by admissions rules that ignore demographic information. For example, suppose college admissions is set by a uniform admissions threshold on entrance exam scores. Assuming these scores are equally informative about all groups, this will guarantee that conditioned on a student's type, whether or not she is \emph{admitted to college} will be independent of her group membership, but it does not imply that whether or not she is ultimately hired is independent of her group! This is because exam scores are only a noisy signal about student type. Therefore, if two groups have different prior distributions on type, they will have different posterior distributions on type when conditioned on being admitted to college according to a group-blind admissions rule. The result will be that a Bayesian employer will insist that students from a group with lower mean or higher variance will have to cross a higher threshold on their college grades in order to be hired. In addition to incentivizing explicit group-based discrimination by the employer, this also results in a failure of equal opportunity for the students, because once admitted to college, two individuals of the same type might have to cross different grade thresholds in order to be hired. Thus, a simple ``group blind'' admissions rule fails to achieve either goal 1 or 2 as laid out above. In this paper, we study the extent to which these goals can be achieved via other means available to the college: in particular, how it admits and grades students.

\subsection{Limitations of our Model}
When interpreting our results, it is important to understand the scope and limitations of our model. First, this paper considers fairness goals that are limited to preventing inequity from being further propagated --- treating opportunities at the high school level and earlier as fixed --- and that do not attempt to correct for past inequity. This manifests itself in that our ``equal opportunity'' goal takes as given that the prospects for employment may ``fairly'' vary as a function of an individual's type \emph{at the time at which they apply for college}, and does not attempt to address or correct the historical forces that might have resulted in different groups having different type distributions to begin with. Attempting to correct for this kind of historical inequity would require a ``value-added model'' of education, in which colleges can \emph{change} the type distributions of their student population either through the direct effect of education, or through a second order effect on student behavior before they apply. In our model, colleges do not change student types, they only serve as signaling mechanisms. Similarly, our ``equal opportunity'' goal aims to equalize the probability that students are hired conditioned on their types --- but one might reasonably instead ask for a corrective notion of fairness, in which the probability of passing through the pipeline is \emph{higher} for the historically disadvantaged group conditioned on type. We do not consider this.

Our model also ignores the possibility that exam scores and grades are themselves \emph{biased}. We explicitly assume the opposite --- that exam scores and grades are unbiased estimators of student types, for both groups. If instead exam scores were systematically biased downwards for one group, then the response of a rational employer to an admissions policy would be very different --- because students who made it through the college pipeline \emph{despite} negative bias would have a higher relative posterior probability of having a high type. There is evidence that effects of this sort are real \cite{bohren2017dynamics}.

The two kinds of fairness goals that we study do not speak to the size of the student of employee population coming from each group. For example, in principle, one could satisfy both the equal opportunity and elimination-of-downstream-bias goals that we propose, but at a cost of employing very few individuals from one of the groups. However, we show that even without an additional goal of having large representation from both groups, the fairness goals we set out cannot generally be achieved.

Finally, we assume that employers are single-minded expectation maximizers, with no explicit desire for fairness or diversity. Of course this is often not the case.

Despite these limitations and simplifying assumptions, we find that in the model we study, many natural fairness goals are already impossible. We think that these negative results are likely to persist in more complex models that attempt to capture additional realism.

\subsection{Our Model and Results}
We consider a simple model of admissions, grading and hiring that views the role of colleges only as a means of signaling quality and performing a gatekeeping function, rather than as providing explicit value added\footnote{This is consistent with the signaling view of the role of colleges in the economics literature, beginning with \cite{spence}}. We consider two \emph{groups} representing pre-defined subsets of the population, divided according to socio-economic or other demographic lines. Each student from group $i$ is endowed with a \emph{type} $t$, which is drawn independently from a Gaussian type distribution $P_i$ that is dependent on the students' group membership. A student's type ultimately measures her value to an employer. We model employers as having a fixed  cost $C$ for hiring an individual, and a gain that is proportional to their type. If the employer hires an individual who has type $t$, they obtain utility $t-C$. A college can choose an admissions rule and a grading policy. Although students types are unobservable, each student has an admissions exam score that is an observable unbiased estimator of their type. We model exam scores as being distributed as a unit variance Gaussian, centered at the student's type.  An admissions policy for the school is a mapping between exam scores and admissions probabilities. We allow schools to set different admissions policies for different groups, but for most of our results, we require the natural condition that admissions probabilities within a group be monotonically non-decreasing in exam scores\footnote{A non-monotone admissions rule would have the property that sometimes a student with a lower exam score would have a higher probability of admission that a student with a higher exam score. Non-monotonicity within a group is highly undesirable, because it would give some students a perverse incentive to intentionally try and lower their exam scores. If such incentives were present, it would no longer be reasonable to model exam scores as unbiased estimators of student types.}. Deterministic monotone admissions policies simply correspond to setting admissions thresholds based on exam scores. For simplicity, in the body of the paper, we restrict attention to deterministic admissions rules, but in the Appendix, we extend our results to cover probabilistic admissions rules as well.

Schools may also set a grading policy. A grade is also modeled as a Gaussian centered at a student's true type, but the school may choose the variance of the distribution. We assume that a student's grade is conditionally independent of her entrance exam score, conditioned on her type. One limiting extreme (infinite variance) corresponds to committing not to report grades at all. This limiting case is actually achievable because schools can simply opt not to share grades ---  in fact, this practice has been adopted at several top business schools \cite{nogrades}.  At the other limiting extreme, types are perfectly observable. This extreme is generally not achievable, and we do not consider it in this paper. In between, the school can modulate the strength of the signal that employers get about student type, beyond the simple indicator that they were admitted to college.

Employers know the prior distributions $P_i$ on student types, as well as the admissions and grading policy of the school. They are rational expectation maximizers. When deciding whether or not to hire a student, they will condition on all information available to them --- a student'a group membership, the fact that she was admitted to college under the college's admissions policy, and the grade that she received under the college's grading policy --- to form a posterior distribution about the student's type. They will hire exactly those students for whom they have positive expected utility under this posterior distribution.

In order to incentivize a particular employer to use a hiring rule that is independent of group membership, it is necessary to set admissions and grading policies such that for every student admitted to the school, and for every grade $g$ that she may receive, the indicator that the conditional expectation of her type $t$ is above the employer's hiring cost $C$ is independent of the student's group membership. If there is uncertainty about what the employer's hiring cost $C$ is, or if there are multiple employers, then it is necessary to guarantee this property for an interval of hiring costs $C \in [C^-, C^+]$ rather than for just a fixed cost. We distinguish these two cases. We call this property Irrelevance of Group Membership (IGM), in the single threshold and multiple threshold case respectively. A seemingly stronger property that we might desire is that the posterior distribution on student types conditional on admission to college is \emph{identical} for both groups. We call this property \emph{strong} Irrelevance of Group Membership (sIGM). Because it symmetrizes the two groups, it in particular guarantees that members of both groups will be treated identically by rational decision makers at any further stage down the decision making pipeline. We show that in the presence of finite, nonzero variance in both exam scores and grades, IGM in the multiple threshold case implies sIGM. Finally, we say that an admissions rule and grading policy satisfy the \emph{equal opportunity} condition, if a student's probability of making it all the way through the pipeline --- i.e. being admitted to college \emph{and then} being hired by the employer, is independent of her group conditioned on her type. Trivially, any group-symmetric admissions policy will satisfy both conditions if the two group type distributions are identical, so for the results that follow, we always assume that the group type distributions are distinct --- differing in their mean, their variance, or both.

First, to emphasize that our impossibility results will crucially depend on the fact that exam scores are only a noisy signal of student ability, we consider the noiseless case, in which college admissions can be decided \emph{directly} as a function of student type (this corresponds to the case in which exam scores have no noise). In this case, we can ``have it all'': there is a simple monotone admissions rule that guarantees both the equal opportunity condition, and satisfies IGM for multiple thresholds --- for any grading policy that the school might choose. After establishing this simple result, in the rest of the paper we move on to the more realistic case in which exam scores are only a noisy signal of student type.

Next, we study what is possible if the college chooses to not report grades at all. In this case, we can also ``have it all'' --- simply by setting a sufficiently high, group independent admissions threshold, a school can achieve both equal opportunity and IGM for multiple thresholds. This gives another view of the effects of practicing grade non-disclosure at highly selective schools \cite{nogrades}.

Finally, in the bulk of the paper, we study the common case in which the college uses informative grades --- i.e. sets the variance of its grade distribution to be some finite value. In this case, we show that it \emph{is} possible to obtain IGM  in the single threshold case, but that no monotone admissions rule can obtain sIGM. Because of the equivalence between sIGM and IGM for the multiple threshold case, this implies that no monotone admissions rule can obtain IGM in the multiple threshold case, even in isolation. Next, we consider the equal opportunity condition. One trivial way to obtain it is to simply admit nobody to college. We show that this is in general the only way in the multiple thresholds case: no non-zero monotone admissions rule can satisfy the equal opportunity condition, even in isolation.

\subsection{Related Work}
Our work fits into two streams of research. Within the recent line of work on algorithmic fairness, the most closely related work is that of Chouldechova \cite{Chou16} and Kleinberg, Mullainathan, and Raghavan \cite{KMR16}. Both of these papers prove the impossibility of simultaneously satisfying certain fairness desiderata in batch classification and regression settings. Broadly speaking, both papers show the impossibility of simultaneously equalizing false positive and false negative rates (related to our equal opportunity goal --- see also \cite{HPS16}) and positive predictive value or calibration (related to our IGM goals). Our work is quite different, however: the goals that we study are not direct properties of the classification rule in question (in our case, the college admissions rule), but instead properties of its downstream effects. And while the work of \cite{Chou16,KMR16} shows the impossibility of simultaneously satisfying these fairness criteria, in our setting, we show that they are often impossible to satisfy even in isolation.

Our paper also fits into an older line of work studying economic models of discrimination and affirmative action, which has its modern roots in \cite{arrow} and \cite{phelps}. For example, Coate and Loury \cite{CL93} and Foster and Vohra \cite{FV92} study two stage models in which students from two different groups (who are a-priori identical) can in the first stage choose whether or not to make a costly investment in themselves, which will increase their value to employers. In the 2nd stage, employers may set a hiring rule that acts on a noisy signal about student quality. These works show the existence of a self-confirming equilibrium, in which only one group makes investments in themselves and are subsequently given employment opportunities, and consider interventions which can escape these discriminatory equilibria. These works can be viewed as studying the ``upstream effects'' of affirmative action policies, and explaining the mechanics by which different student populations may end up with different type distributions. The effect of the interventions proposed in these models is very slow, because it requires a new generation of students to recognize the opportunities made available to them via affirmative action policies and make costly investments in their education in response, well before they enter the job market. In contrast, our work can be viewed as studying the ``downstream effects'' of these policies and examining shorter term effects which can be realized in a time frame that need not be long enough for type distributions to change.

More recently, the computer science community has begun studying fairness desiderata in dynamic models. Jabbari et al study the costs (measured as their effect on the rate of learning) of imposing fairness constraints on learners in general Markov decision processes \cite{JJKMR17}. Hu and Chen \cite{HC18} study a dynamic model of the labor market similar to that of \cite{CL93,FV92} in which two populations are symmetric, but can choose to exert costly effort in order to improve their value to an employer. They study a two stage model of a labor market in which interventions in a ``temporary'' labor market can lead to high welfare symmetric equilibrium in the long run.  Liu et al. \cite{delayed} study a two round model of lending in which lending decisions in the first round can change the type distribution of applicants in the 2nd round, according to a known, exogenously specified function. They study how statistical constraints on the lending rule can improve or harm outcomes as compared to a myopic (i.e. ignoring dynamic effects) profit maximizing rule, and find that for two kinds of interventions, both improvement and harm are possible, depending on the details of how lending effects the type distribution. Finally, \cite{incentives} studied the regulator's problem of providing financial incentives for a lender to satisfy fairness constraints in an online classification setting.

\section{Model}
We consider two populations of students, $1$ and $2$. In population $i \in \{1,2\}$, each student has a type drawn from a Gaussian distribution $P_i = \mathcal{N}\left(\mu_i,\sigma_i^2\right)$ with mean $\mu_i$ and variance $\sigma_i^2$. Since our problem is trivial if $P_1 = P_2$, in this paper we assume always that $P_1 \neq P_2$, i.e. the type distributions differ either in their mean, or their variance, or both. We denote by $T_i$ the random variable that represents the type of a student from population $i$. Throughout the paper,  $\phi$ denotes the probability density function and $\Phi$ the cumulative density  function of a standard normal random variable with mean $0$ and variance $1$.

Each student takes a standardized test (SAT, etc.) and obtains a score given by
\[
S_i = T_i + X
\]
where $X$ follows a normal distribution with mean $0$ and variance $1$, that does not depend on the population $i$, i.e., the student's score is a noisy but unbiased estimate of his type.

Additionally, we consider a university that admits students from both populations. The university designs an admission rule $A_i: \reals \to [0,1]$ for each population $i$, such that a student from population $i$ with score $s$ is accepted with probability $A_i(s)$. We also abuse notation and let $A_i$ denote the binary random variable whose value is $1$ if a student is accepted, and $0$ otherwise. This admission rule is required to be monotone non-decreasing; i.e. an increase in exam score cannot lead to a \emph{decrease} in admissions probability. We say that an admissions rule is deterministic if $A_i(s) \in \{0,1\}$. A deterministic monotone admissions rule is characterized by a threshold $\beta_i$ such that a student is accepted if and only if $S_i \geq \beta_i$. We call such rules``thresholding admissions rules''. We focus on thresholding admissions rules in the body of this paper, but extend our results to probabilistic admissions rules to the Appendix. For simplicity of notation, we will often write $x_i(t)=\Pr \left[A_i = 1 |T_i = t \right]$ (Note that $x_i(t) = \Pr \left[S_i \geq \beta_i |T_i = t \right]$ in the deterministic case).

Every student who is admitted to the university receives a grade, given by:
\[
G_i = T_i + Y
\]
where $Y$ follows a normal distribution with mean $0$ and variance $\gamma^2$ that does not depend on the population $i$. $\gamma$ can be set by the university, and represents the strength of the signal provided by a grading policy\footnote{In actuality, of course, students receive many grades, not just one. But note that when one averages two normally distributed random variables, the result is also normally distributed, but with lower variance. Hence, one way to modulate the variance of a grade signal is to modulate the \emph{number} of grades computed. The more assignments and exams that are graded, the lower the variance of the signal. The fewer that are graded, the higher the variance.}. In our model, the University must commit to a single grading policy to use across groups.

Finally, an employer makes a hiring decision for each student that graduates from the university. The employer knows the priors $P_i$, the admission rules $A_1,~A_2$ used by the school, the grading policy $\gamma$, and observes the grades of the students (as well as the fact that they were admitted to the school). The employer's expected utility for accepting a university graduate from population $i$ with grade $g$ is then given by
\[
\mathbb{E} \left[ T_i |G_i = g, A_i = 1\right] - C
\]
where $C$ is the cost for the employer to hire a student.
%\ar{Hm, what is $A_i = 1$? $A_i(S_i)$ is the probabilty of admissions, but thats not what we want here. We want an indicator random variable that indicates admission. Have we defined such a thing?}\jz{There is a note above where I am stating that I am abusing notations and using $A_i$ as the binary indicator r.v. Maybe this needs to be made more prevalent and explained better? BAsically I wanted to avoid having to introduce an additional variable...}\ar{Ah, ok, I missed that.}
The employer hires a university graduate from population $i$ with grade $g$ if and only if
\[
\mathbb{E} \left[ T_i  | G_i = g, A_i = 1\right] \geq C
\]

%\jz{Check the definitions}
Throughout the paper, we study the feasibility of achieving the following fairness goals:

\begin{definition}[Equal opportunity]
Equal opportunity holds if and only if the probability of a student being hired by the employer conditional on his type is independent of the student's group. I.e. if for all types $t \in \reals$,
\begin{align*}
&\int_{g} \Pr \left[G_1 = g, A_1 = 1 | T_1 = t \right] \mathbbm{1} \{ \E \left[T_1  | G_1 = g, A_1 = 1 \right] \geq C\} dg
\\& = \int_{g} \Pr \left[G_2 = g, A_2 = 1 | T_2 = t\right] \mathbbm{1} \{ \E \left[T_2  | G_2 = g, A_2 = 1 \right] \geq C\} dg
\end{align*}
\end{definition}
%\jz{Is that too much to parse in the model section?}
%\sk{Shouldn't equal opportunity be defined conditioned on type, not on grade? So, we probably mean that this equation should hold for each starting value of type... in which case, we should say this in the definition. Okay with me to introduce the hard definitions here... we have a lot of intro text already...}\jz{Fixed}

\begin{definition}[Irrelevance of Group Membership]
\textit{Irrelevance of Group Membership (IGM)} holds if and only if, conditional on admission by the school and on grade $g$, the employer's decision on whether to hire a student is independent of the student's group. I.e. if for all grades $g \in \reals$,
\begin{align*}
\E \left[T_1  | G_1 = g, A_1 = 1 \right] \geq C \Leftrightarrow \E \left[T_2  | G_2 = g, A_2 = 1 \right] \geq C
\end{align*}
\end{definition}

We further introduce a robust version of IGM, called \textit{strong} Irrelevance of Group Membership, that symmetrizes the two populations and guarantees that members of both populations will be treated identically by rational decision makers at any further stage of the decision making pipeline.

\begin{definition}[strong Irrelevance of Group Membership]
\textit{Strong Irrelevance of Group Membership (sIGM)} holds if and only if, conditional on admission by the school and on grade $g$, the employer's posterior on a student's type is independent of the student's population. I.e., for all $g \in \reals$, for all $t \in \reals$,
\begin{align*}
\Pr \left[T_1 = t  | G_1 = g, A_1 = 1 \right] = \Pr \left[T_2 = t  | G_2 = g, A_2 = 1 \right]
\end{align*}
\end{definition}
%\jz{Added this for an equivalent definition of sIGM}
We note that sIGM holds if and only if the posterior on students' types conditional on admission by the school are identical:
\begin{claim}\label{clm: sIGM_equivalence}
sIGM holds if and only if for all $t \in \reals$:
\begin{align*}
\Pr \left[T_1 = t  | A_1 = 1 \right] = \Pr \left[T_2 = t  | A_2 = 1 \right]
\end{align*}
\end{claim}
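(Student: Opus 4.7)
The plan is to use the conditional independence of the grade $G_i$ from the admission indicator $A_i$ given the type $T_i$, which is built into the model (the grading noise $Y$ is independent of everything conditioned on the type). This makes the likelihood $\Pr[G_i = g \mid T_i = t, A_i = 1]$ equal to $\frac{1}{\gamma}\phi\!\left(\frac{g-t}{\gamma}\right)$, a quantity that is explicitly the same for both groups. The proof then reduces to a Bayes-rule calculation.

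First I would handle the easier direction. Assuming $\Pr[T_1 = t \mid A_1 = 1] = \Pr[T_2 = t \mid A_2 = 1]$ for all $t$, Bayes' rule gives
\[
\Pr[T_i = t \mid G_i = g, A_i = 1] = \frac{\Pr[G_i = g \mid T_i = t]\,\Pr[T_i = t \mid A_i = 1]}{\int \Pr[G_i = g \mid T_i = t']\,\Pr[T_i = t' \mid A_i = 1]\,dt'}.
\]
Both the likelihood $\Pr[G_i = g \mid T_i = t]$ and, by hypothesis, the post-admission prior $\Pr[T_i = t \mid A_i = 1]$ are independent of $i$, so the entire right-hand side does not depend on $i$, which is exactly sIGM.

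For the forward direction, assume sIGM and again invoke Bayes' rule. Writing $f_i(t) = \Pr[T_i = t \mid A_i = 1]$ and $h_i(g) = \Pr[G_i = g \mid A_i = 1]$, sIGM at $(g,t)$ reads
\[
\frac{\Pr[G_i = g \mid T_i = t]\,f_1(t)}{h_1(g)} = \frac{\Pr[G_i = g \mid T_i = t]\,f_2(t)}{h_2(g)}.
\]
The Gaussian likelihood is strictly positive and independent of $i$, so it cancels, giving $f_1(t)\,h_2(g) = f_2(t)\,h_1(g)$ for all $t,g$. Fixing any $t_0$ with $f_1(t_0),f_2(t_0)>0$ (which exists since $f_1,f_2$ are densities on $\reals$) shows $h_2(g)/h_1(g)$ is a constant in $g$; since both $h_i$ integrate to $1$ they must be equal. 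Plugging back in yields $f_1(t) = f_2(t)$ for all $t$.

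The only mild subtlety is the rigor of the Bayes-rule step in the continuous setting (the conditional densities are well defined because admission depends on $T_i$ only through the auxiliary score $S_i = T_i + X$, so $\Pr[T_i = t \mid A_i = 1]$ is an honest density on $\reals$, and similarly for $G_i$), but no actual obstacle arises. There is nothing combinatorially or analytically delicate here: the claim is essentially a restatement that the grading channel $T_i \mapsto G_i$ is the same channel for both groups, so information about the type that is transmitted to the employer through the grade is determined by the post-admission type distribution alone.
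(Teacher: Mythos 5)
Your proof is correct and follows essentially the same route as the paper's: both apply Bayes' rule, use the fact that the grading likelihood $\phi\left(\frac{g-t}{\gamma}\right)$ is group-independent (by conditional independence of $G_i$ and $A_i$ given $T_i$), and then use the normalization of the densities to force both the post-admission type densities and the grade marginals to coincide. Your version is marginally more explicit in the forward direction (fixing a $t_0$ with positive density to pin down the constant ratio), but the substance is identical.
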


\begin{proof}
See Appendix~\ref{app: sIGM_equivalence}
\end{proof}

\section{Inference Preliminaries}
In this section, we derive some basic properties of the joint distributions on student types, exam scores, admissions rules, and grades that are relevant for reasoning about the employer's Bayesian inference task. We will draw upon these basic results in the coming sections.

\subsection{Preliminaries on Gaussians and Multivariate Gaussians}%\jz{New section title?}

First, we observe that together, student types, exam scores, and grades are distributed according to a multi-variate Gaussian.
\begin{claim}
$(T_i,S_i,G_i)$ follows a multivariate normal distribution.
\end{claim}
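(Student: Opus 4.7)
The plan is to express $(T_i, S_i, G_i)$ as a linear transformation of a vector of independent Gaussians and invoke the standard fact that linear transformations of jointly Gaussian vectors are jointly Gaussian.

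First I would collect the primitive random variables from the model: the type $T_i \sim \mathcal{N}(\mu_i, \sigma_i^2)$, the exam noise $X \sim \mathcal{N}(0,1)$, and the grading noise $Y \sim \mathcal{N}(0,\gamma^2)$. By the modeling assumptions, $X$ is independent of $T_i$, $Y$ is independent of $T_i$, and (since grades are assumed conditionally independent of exam scores given type, and both noise terms are exogenous) $X$ and $Y$ are independent of each other as well. Hence the three-dimensional vector $(T_i, X, Y)$ has mutually independent Gaussian coordinates, and so it is a multivariate Gaussian vector.

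Next I would write the triple of interest as an explicit linear transformation of $(T_i, X, Y)$:
\begin{equation*}
\begin{pmatrix} T_i \\ S_i \\ G_i \end{pmatrix}
= \begin{pmatrix} 1 & 0 & 0 \\ 1 & 1 & 0 \\ 1 & 0 & 1 \end{pmatrix}
\begin{pmatrix} T_i \\ X \\ Y \end{pmatrix},
\end{equation*}
using the defining relations $S_i = T_i + X$ and $G_i = T_i + Y$. Since any affine transformation of a multivariate Gaussian vector is again multivariate Gaussian, the conclusion follows immediately.

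There is essentially no obstacle here; the only thing to be careful about is justifying the joint independence of $T_i$, $X$, and $Y$ (as opposed to just pairwise independence), which is implicit in the model description that grades are conditionally independent of exam scores given type and that both $X$ and $Y$ are exogenous noise terms whose distributions do not depend on $i$. For completeness I would also compute the mean vector $(\mu_i, \mu_i, \mu_i)$ and the covariance matrix, whose diagonal is $(\sigma_i^2, \sigma_i^2 + 1, \sigma_i^2 + \gamma^2)$ and whose off-diagonal entries are all $\sigma_i^2$, since these quantities will be needed for the Bayesian inference computations in the subsequent sections.
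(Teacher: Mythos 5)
Your proof is correct and is essentially the same argument as the paper's: both reduce to the joint independence of $T_i$, $X$, and $Y$ and a standard closure property of Gaussians (the paper checks that every linear combination $aT_i + bS_i + cG_i = (a+b+c)T_i + bX + cY$ is univariate normal, while you phrase it as a linear image of the independent Gaussian vector $(T_i, X, Y)$ — two formulations of the same fact). Your extra computation of the mean vector and covariance matrix matches what the paper later uses in the proof of Lemma~\ref{lem: closed_form_bayes_posterior}.
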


\begin{proof}
A set of random variables is distributed according to a multivariate normal distribution if every linear combination of the variables is distributed as a univariate normal distribution.
For all $a,b,c \in \reals$, $a T_i + b S_i + c G_i = (a+b+c) T_i + b X_i + c Y_i$ follows a normal distribution as the sum of independent normal random variables.
\end{proof}
We now quote a basic fact about the conditional distribution that results when one starts with a multi-variate normal distribution, and conditions on the realization of a subset of its coordinates.
\begin{claim}\label{clm: conditional_MVN}
Let $n \geq 2$ be an integer. Let $Z \in \mathbb{R}^n$ be a random variable following a multi-variate normal distribution. Let $Z = (Z_1,Z_2)$ where $Z_i \in \mathbb{R}^{n_i}$ with $n_1 + n_2 = n$. Suppose $Z$ has mean $m =(m_1,m_2)$ where $m_i \in \reals^{n_i}$, and covariance matrix
\[
\Sigma =
\left[
\begin{array}{c|c}
\Sigma_{11} & \Sigma_{12}  \\
\hline
\Sigma_{21} & \Sigma_{22}
\end{array}
\right]
\]
where $\Sigma_{ij} \in \reals^{n_i \times n_j}$. Then $\E \left[ Z_1 | Z_2 = z_2 \right] = m_1 + \Sigma_{12} \Sigma_{22}^{-1} (z_2 - m_2)$ and $\Var \left[ Z_1 | Z_2 = z_2 \right] = \Sigma_{11} - \Sigma_{12} \Sigma_{22}^{-1} \Sigma_{21}$.
\end{claim}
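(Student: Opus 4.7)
The plan is to prove the claim via the standard "subtract off the linear regression" trick, which avoids any direct manipulation of the joint density or Schur complements. Everything follows from the fact that for jointly Gaussian random vectors, zero covariance implies independence.

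First, I would define the auxiliary random vector
\[
W \;=\; Z_1 \;-\; \Sigma_{12}\Sigma_{22}^{-1}(Z_2 - m_2).
\]
Since $W$ is an affine transformation of $Z$, the stacked vector $(W, Z_2)$ is jointly Gaussian, being a linear transformation of a multivariate normal. A direct computation gives $\E[W] = m_1$ and
\[
\mathrm{Cov}(W, Z_2) \;=\; \mathrm{Cov}(Z_1, Z_2) - \Sigma_{12}\Sigma_{22}^{-1}\mathrm{Cov}(Z_2, Z_2) \;=\; \Sigma_{12} - \Sigma_{12}\Sigma_{22}^{-1}\Sigma_{22} \;=\; 0.
\]
Because $(W,Z_2)$ is jointly Gaussian and uncorrelated, $W$ and $Z_2$ are independent. (This is the one place where the Gaussian assumption is essential; for general distributions uncorrelatedness is weaker than independence.)

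Next, I would use this independence to read off both conclusions. For the conditional mean, write $Z_1 = W + \Sigma_{12}\Sigma_{22}^{-1}(Z_2 - m_2)$; then conditioning on $Z_2 = z_2$, the second summand becomes a constant and the first has expectation $\E[W] = m_1$ by independence, giving
\[
\E[Z_1 \mid Z_2 = z_2] \;=\; m_1 + \Sigma_{12}\Sigma_{22}^{-1}(z_2 - m_2).
\]
For the conditional variance, independence likewise gives $\Var[Z_1 \mid Z_2 = z_2] = \Var[W \mid Z_2 = z_2] = \Var[W]$, and a direct computation using $\Var[AX] = A\Var[X]A^\top$ and the block structure of $\Sigma$ yields
\[
\Var[W] \;=\; \Sigma_{11} - \Sigma_{12}\Sigma_{22}^{-1}\Sigma_{21} - \Sigma_{12}\Sigma_{22}^{-1}\Sigma_{21} + \Sigma_{12}\Sigma_{22}^{-1}\Sigma_{22}\Sigma_{22}^{-1}\Sigma_{21} \;=\; \Sigma_{11} - \Sigma_{12}\Sigma_{22}^{-1}\Sigma_{21}.
\]

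There is no real obstacle here: the one subtlety is the implicit assumption that $\Sigma_{22}$ is invertible, which the statement already presupposes by writing $\Sigma_{22}^{-1}$. If one wanted to avoid this assumption, the cleaner route would be via the Moore–Penrose pseudoinverse, but since the statement is formulated with a genuine inverse, the argument above suffices. The only "choice" in the proof is the ansatz for $W$; one motivates it by asking which linear combination $Z_1 - B Z_2$ is uncorrelated with $Z_2$, and solving $\Sigma_{12} - B\Sigma_{22} = 0$ gives $B = \Sigma_{12}\Sigma_{22}^{-1}$ uniquely.
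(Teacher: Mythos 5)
Your proof is correct and complete. The paper does not actually prove this claim --- it simply cites external lecture notes \cite{MVN2008} --- so your argument supplies something the paper leaves as a black box. Your route is the standard ``decorrelation'' proof: set $W = Z_1 - \Sigma_{12}\Sigma_{22}^{-1}(Z_2-m_2)$, check that $(W,Z_2)$ is jointly Gaussian with $\mathrm{Cov}(W,Z_2)=0$, invoke the equivalence of uncorrelatedness and independence for jointly Gaussian vectors, and read off the conditional mean and variance. All the computations check out, including the variance expansion $\Var[W] = \Sigma_{11} - 2\,\Sigma_{12}\Sigma_{22}^{-1}\Sigma_{21} + \Sigma_{12}\Sigma_{22}^{-1}\Sigma_{22}\Sigma_{22}^{-1}\Sigma_{21} = \Sigma_{11}-\Sigma_{12}\Sigma_{22}^{-1}\Sigma_{21}$, and the observation that conditioning does not change $\Var[W]$ because $W$ is independent of $Z_2$. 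The typical alternative (and what the cited notes do) is to manipulate the joint density directly and complete the square, which produces the Schur complement explicitly but requires $\Sigma$ to be nonsingular so that a density exists; your argument is cleaner and needs only $\Sigma_{22}$ invertible, which, as you note, the statement already presupposes and which holds in the paper's application since the exam and grade noises have strictly positive variance. Your remark about the pseudoinverse generalization is a nice aside but not needed here.
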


\begin{proof}
See lecture notes \cite{MVN2008}.
\end{proof}

The following technical lemma will also be useful for us.
%\jz{Aaron, does this look good?}\ar{Yes --  move proof to appendix}
\begin{claim}\label{clm: hazard_rate}
The hazard rate $H(x) = \frac{\phi(x)}{1-\Phi(x)}$ of a  standard normal random variable is increasing, and satisfies
\[
\lim_{x \to -\infty} H(x) = 0,~H(x) = x + o_{x \to +\infty}(1)
\]
\end{claim}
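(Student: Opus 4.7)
The plan is to handle the three assertions in order: monotonicity, the $-\infty$ limit, and the $+\infty$ asymptotic. I will work directly with the definitions of $\phi$ and $\Phi$ and use integration by parts where needed.

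First, for monotonicity, I would differentiate. Using $\phi'(x) = -x\phi(x)$,
\begin{align*}
H'(x) = \frac{-x\phi(x)(1-\Phi(x)) + \phi(x)^2}{(1-\Phi(x))^2} = H(x)\bigl(H(x) - x\bigr).
\end{align*}
Since $H(x) > 0$, it suffices to show $H(x) > x$, i.e.\ $\phi(x) > x(1-\Phi(x))$. For $x \leq 0$ this is immediate because the right-hand side is nonpositive. For $x > 0$, I would use the standard Mills-ratio-style bound: from $\phi'(u) = -u\phi(u)$ and $u > x > 0$,
\begin{align*}
1 - \Phi(x) = \int_x^{\infty} \phi(u)\,du < \int_x^{\infty} \frac{u}{x}\phi(u)\,du = \frac{\phi(x)}{x},
\end{align*}
which rearranges to $\phi(x) > x(1-\Phi(x))$. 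Hence $H'(x) > 0$ everywhere.

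The limit as $x \to -\infty$ is the easy case: $\phi(x) \to 0$ while $1 - \Phi(x) \to 1$, so $H(x) \to 0$.

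The asymptotic $H(x) = x + o_{x \to +\infty}(1)$ is the main technical step. I would sharpen the Mills inequality by one integration by parts. Writing $\phi(u) = -\phi'(u)/u$ and integrating by parts,
\begin{align*}
1 - \Phi(x) = \int_x^{\infty} \phi(u)\,du = \frac{\phi(x)}{x} - \int_x^{\infty} \frac{\phi(u)}{u^2}\,du.
\end{align*}
The remainder integral is positive and bounded above by $\frac{1}{x^2}\int_x^\infty \phi(u)\,du \leq \frac{\phi(x)}{x^3}$ by the previous bound. Thus
\begin{align*}
1 - \Phi(x) = \frac{\phi(x)}{x}\bigl(1 - r(x)\bigr), \qquad 0 \leq r(x) \leq \frac{1}{x^2},
\end{align*}
and therefore $H(x) = \frac{x}{1 - r(x)} = x + \frac{x\,r(x)}{1 - r(x)}$. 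Since $x\,r(x) \leq 1/x \to 0$ and $1 - r(x) \to 1$, the correction term tends to $0$, giving $H(x) - x \to 0$ as $x \to +\infty$.

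The only step that requires real care is the $+\infty$ asymptotic; once the single integration by parts is in hand, a simple bound on the remainder gives the desired $o(1)$ statement. The rest follows from standard computations.
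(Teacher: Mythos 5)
Your proposal is correct, and its overall architecture matches the paper's: both compute $H'(x) = H(x)\left(H(x)-x\right)$ and reduce monotonicity to the lower bound $H(x) \geq x$, both obtain that lower bound from the Mills-ratio inequality $1-\Phi(x) \leq \phi(x)/x$ (you are in fact slightly more careful than the paper in splitting off the case $x \leq 0$, where the inequality $t/x \geq 1$ used to derive the Mills bound is not available but the claim is trivial), and both dispatch the $-\infty$ limit by inspection. The one place you genuinely diverge is the upper bound needed for $H(x) = x + o(1)$. The paper introduces the auxiliary function $g(x) = 1-\Phi(x) - \frac{x}{x^2+1}\phi(x)$, checks by direct differentiation that $g' \leq 0$ and $g \to 0$, and concludes $H(x) \leq x + 1/x$; this is a slick but somewhat ad hoc verification that requires guessing the right comparison function. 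You instead integrate by parts once to get $1-\Phi(x) = \frac{\phi(x)}{x} - \int_x^\infty \frac{\phi(u)}{u^2}\,du$, bound the remainder by $\phi(x)/x^3$ via the same Mills inequality, and read off $H(x) = x/(1-r(x))$ with $0 \leq r(x) \leq 1/x^2$. Your bound ($H(x) \leq x + \frac{x}{x^2-1}$ for $x>1$) is marginally weaker than the paper's $x + 1/x$, but equally sufficient for the $o(1)$ claim, and your method is more systematic: it is the first step of the standard asymptotic expansion of the Mills ratio and would generalize to higher-order corrections without further guesswork. Both arguments are complete and correct.
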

%\jz{Will move the proof to Appendix after someone has been through this.}

This is a commonly known result in the literature on probability theory and statistics. For completeness, we provide a proof in Appendix \ref{app: hazard_rate}.
%\jz{Does that look ok?} \ar{Sure -- but don't put things that aren't proofs in proof %environments. :-)}

\subsection{Employer's First Moment Inference}

The main lemma of this section characterizes the employer's Bayesian inference task when the college is using a threshold admissions rule: the posterior expectation of a student's type, conditioned on their exam score being sufficiently high to cross the admissions threshold, and on their observed grade.  In the appendix, we give the corresponding inference rule for the employer when the college can use an arbitrary monotone admissions rule.
\begin{lemma}\label{lem: closed_form_bayes_posterior}
\begin{align*}
&\E \left[ T_i | S_i \geq \beta_i, G_i = g \right]
\\&= \frac{\gamma^2}{\sigma_i^2 + \gamma^2} \mu_i
+ \frac{\sigma_i^2}{\sigma^2 + \gamma^2} g
\\&+ \frac{\gamma^2 \sigma_i^2}{\sqrt{(\sigma_i^2+\gamma^2) (\sigma_i^2 + \gamma^2 + \gamma^2 \sigma_i^2)}}
\cdot
H \left( \frac{(\sigma_i^2 + \gamma^2) \cdot \beta_i - \gamma^2 \mu_i - \sigma_i^2 g}{ \sqrt{(\sigma_i^2+ \gamma^2) (\sigma_i^2+ \gamma^2 + \gamma^2 \sigma_i^2)}} \right)
\end{align*}
where $H(x) = \frac{\phi(x)}{1-\Phi(x)}$ is the Hazard function of a standard normal random variable.
\end{lemma}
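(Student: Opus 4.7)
My plan is to reduce the computation to two standard tools: the multivariate Gaussian conditioning formula (Claim~\ref{clm: conditional_MVN}) applied to the continuous observation $G_i=g$, and the standard hazard-function formula for the mean of a univariate normal truncated from below, applied to the event $S_i \geq \beta_i$.

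First I would assemble the joint distribution of $(T_i, S_i, G_i)$. The previous claim shows this triple is multivariate Gaussian; its mean vector is $(\mu_i, \mu_i, \mu_i)$ and, using $S_i = T_i + X$ and $G_i = T_i + Y$ with $X, Y, T_i$ mutually independent, its covariance matrix is
\[
\Sigma = \begin{pmatrix} \sigma_i^2 & \sigma_i^2 & \sigma_i^2 \\ \sigma_i^2 & \sigma_i^2+1 & \sigma_i^2 \\ \sigma_i^2 & \sigma_i^2 & \sigma_i^2+\gamma^2 \end{pmatrix}.
\]
Second, I would condition on $G_i = g$ using Claim~\ref{clm: conditional_MVN} with $Z_1 = (T_i, S_i)$ and $Z_2 = G_i$. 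A direct plug-in gives
\[
\E[T_i \mid G_i = g] = \E[S_i \mid G_i = g] = \tfrac{\gamma^2 \mu_i + \sigma_i^2 g}{\sigma_i^2 + \gamma^2},
\]
while the conditional covariance of $(T_i, S_i)$ works out to $\Var[T_i\mid G_i=g] = \mathrm{Cov}(T_i, S_i\mid G_i = g) = \tfrac{\sigma_i^2 \gamma^2}{\sigma_i^2+\gamma^2}$ and $\Var[S_i \mid G_i = g] = \tfrac{\sigma_i^2\gamma^2 + \sigma_i^2 + \gamma^2}{\sigma_i^2 + \gamma^2}$.

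Third, I would use the conditional jointly-Gaussian structure to write
\[
T_i = \mu_{T\mid g} + \tfrac{\mathrm{Cov}(T_i, S_i\mid G_i = g)}{\Var[S_i\mid G_i=g]}\bigl(S_i - \mu_{S\mid g}\bigr) + W,
\]
conditionally on $G_i = g$, where $W$ is independent of $S_i \mid G_i = g$ and has mean zero. Taking expectations conditional on $\{G_i = g, S_i \geq \beta_i\}$ collapses the problem to evaluating $\E[S_i \mid G_i = g, S_i \geq \beta_i]$, since $W$ drops out. Fourth, I would invoke the standard truncated-Gaussian identity $\E[Z \mid Z \geq \beta] = \mu + \sigma H\bigl(\tfrac{\beta - \mu}{\sigma}\bigr)$ for $Z \sim \mathcal N(\mu, \sigma^2)$, applied to $S_i \mid G_i = g$; this is where the hazard function $H$ from Claim~\ref{clm: hazard_rate} enters.

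Finally, I would substitute the explicit expressions for $\mu_{T\mid g}$, $\mu_{S\mid g}$, the conditional covariance, and the conditional variance of $S_i$ into
\[
\E[T_i \mid S_i \geq \beta_i, G_i = g] = \mu_{T\mid g} + \tfrac{\mathrm{Cov}(T_i, S_i\mid G_i = g)}{\sqrt{\Var[S_i\mid G_i=g]}}\, H\!\left(\tfrac{\beta_i - \mu_{S\mid g}}{\sqrt{\Var[S_i\mid G_i=g]}}\right)
\]
and simplify. Clearing the factor $(\sigma_i^2+\gamma^2)$ inside the square roots produces the denominator $\sqrt{(\sigma_i^2+\gamma^2)(\sigma_i^2+\gamma^2+\gamma^2\sigma_i^2)}$ that appears in both the coefficient of $H$ and the argument of $H$. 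The only real obstacle is bookkeeping: keeping the algebra clean enough that the truncation argument $((\sigma_i^2+\gamma^2)\beta_i - \gamma^2 \mu_i - \sigma_i^2 g)/\sqrt{(\sigma_i^2+\gamma^2)(\sigma_i^2+\gamma^2+\gamma^2\sigma_i^2)}$ emerges correctly — there is no conceptual difficulty beyond the two conditioning facts above.
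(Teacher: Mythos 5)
Your proposal is correct and follows essentially the same route as the paper's proof: both reduce the computation to $\E[S_i \mid S_i \geq \beta_i, G_i = g]$ via the multivariate-Gaussian conditioning formula and then apply the truncated-Gaussian hazard identity, and all of your intermediate quantities (the conditional mean $\tfrac{\gamma^2\mu_i+\sigma_i^2 g}{\sigma_i^2+\gamma^2}$, the conditional variance $\tfrac{\sigma_i^2\gamma^2+\sigma_i^2+\gamma^2}{\sigma_i^2+\gamma^2}$, and the regression coefficient $\tfrac{\sigma_i^2\gamma^2}{\sigma_i^2+\gamma^2+\sigma_i^2\gamma^2}$) match the paper's. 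The only cosmetic difference is ordering — the paper conditions $T_i$ on $(S_i,G_i)$ jointly and then averages the linear expression over the truncated $S_i \mid G_i = g$, whereas you condition on $G_i = g$ first and project $T_i$ onto $S_i$ within that conditional law — but the two decompositions yield identical coefficients and the same final formula.
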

% \jz{I have another way of computing this -- via the more complex proof in appendix for general monotone admission rules -- that I checked gives the same result on thresholding admission rules}

\begin{proof}
The proof is given in Appendix~\ref{app: closed_form_bayes_posterior}.
\end{proof}

A corollary of the previous lemma is that the posterior expectation computed by the employer will satisfy a number of nice regularity conditions which will be useful in proving our impossibility results:
\begin{corollary}\label{cor: increasing_limits}
$e_i(\mu_i,\sigma_i,\beta_i,g) = \E \left[ T_i | S_i \geq \beta_i, G_i = g \right]$ is continuous, differentiable, and strictly increasing in each of $\mu_i,~g$ and $\beta_i$. Further,
\begin{align*}
&\lim_{g \to -\infty} e(\mu_i,\sigma_i,\beta_i,g) = -\infty,
\\&  \lim_{g \to +\infty} e_i(\mu_i,\sigma_i,\beta_i,g) = +\infty,
\end{align*}
and
\begin{align*}
&\lim_{\beta_i \to -\infty} e(\mu_i,\sigma_i,\beta_i,g) = \frac{\gamma^2}{\sigma_i^2 + \gamma^2} \mu_i
+ \frac{\sigma_i^2}{\sigma^2 + \gamma^2} g,
\\
&\lim_{\beta_i \to +\infty} e(\mu_i,\sigma_i,\beta_i,g) = + \infty.
\end{align*}
\end{corollary}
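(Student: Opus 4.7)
The plan is to derive every claim directly from the closed-form expression in Lemma~\ref{lem: closed_form_bayes_posterior}, viewed as a function of the three arguments $(\mu_i, g, \beta_i)$ with $\sigma_i^2, \gamma^2$ fixed positive constants. Abbreviating $D = \sqrt{(\sigma_i^2+\gamma^2)(\sigma_i^2+\gamma^2+\gamma^2\sigma_i^2)}$ and $u = ((\sigma_i^2+\gamma^2)\beta_i - \gamma^2\mu_i - \sigma_i^2 g)/D$, we have
\[
e_i = \frac{\gamma^2}{\sigma_i^2+\gamma^2}\mu_i + \frac{\sigma_i^2}{\sigma_i^2+\gamma^2} g + \frac{\gamma^2\sigma_i^2}{D}\, H(u).
\]
Since $\phi$ and $\Phi$ are smooth and $1-\Phi$ is strictly positive everywhere, $H = \phi/(1-\Phi)$ is $C^\infty$; continuity and differentiability of $e_i$ in $(\mu_i,g,\beta_i)$ then follow immediately from the chain rule.

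Monotonicity in $\beta_i$ is direct: only $u$ depends on $\beta_i$, with $\partial u/\partial\beta_i = (\sigma_i^2+\gamma^2)/D > 0$, and $H$ is strictly increasing by Claim~\ref{clm: hazard_rate}. Monotonicity in $g$, and similarly in $\mu_i$, is the one step where the explicit linear term and the hazard-rate term pull in opposite directions; differentiating gives
\[
\frac{\partial e_i}{\partial g} = \frac{\sigma_i^2}{\sigma_i^2+\gamma^2}\left(1 - \frac{\gamma^2\sigma_i^2}{\sigma_i^2+\gamma^2+\gamma^2\sigma_i^2}\, H'(u)\right),
\]
and a structurally identical expression for $\partial e_i/\partial\mu_i$. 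These are strictly positive as soon as $H'(u) < 1 + (\sigma_i^2+\gamma^2)/(\gamma^2\sigma_i^2)$, which is guaranteed by the uniform bound $0 < H'(x) < 1$. This bound is the only mildly non-trivial ingredient, and I expect it to be the main small obstacle; I will derive it from the Mills-ratio identity $H'(x) = H(x)(H(x)-x) = 1 - \Var[Z \mid Z \geq x]$ (with $Z$ standard normal), which itself follows in one line from $\phi'(x) = -x\phi(x)$ and the fact that the variance of a truncated normal lies strictly between $0$ and $1$.

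For the four limits I use the two asymptotics of $H$ from Claim~\ref{clm: hazard_rate}. When $\beta_i \to -\infty$ or $g \to +\infty$, the argument $u$ tends to $-\infty$ and $H(u) \to 0$, so the hazard term vanishes; what remains is the closed-form expression $\frac{\gamma^2}{\sigma_i^2+\gamma^2}\mu_i + \frac{\sigma_i^2}{\sigma_i^2+\gamma^2} g$ in the first case, and the unbounded linear term yielding $+\infty$ in the second. When $\beta_i \to +\infty$ or $g \to -\infty$, $u \to +\infty$ and $H(u) = u + o(1)$; substituting makes the hazard term asymptotically linear. For $\beta_i \to +\infty$ the coefficient of $\beta_i$ simplifies to $\gamma^2\sigma_i^2(\sigma_i^2+\gamma^2)/D^2 > 0$, yielding $+\infty$. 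For $g \to -\infty$, combining the direct linear term $\frac{\sigma_i^2}{\sigma_i^2+\gamma^2} g$ with the hazard contribution $-\gamma^2\sigma_i^4/D^2 \cdot g$ gives net coefficient $\sigma_i^2/(\sigma_i^2+\gamma^2+\gamma^2\sigma_i^2) > 0$ after algebraic simplification, so $e_i \to -\infty$. The $o(1)$ remainders are plainly subleading and do not affect the conclusion.
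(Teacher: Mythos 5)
Your proposal is correct, and for most of the corollary --- continuity, differentiability, monotonicity in $\beta_i$, and all four limits --- it follows the paper's own argument essentially verbatim: read everything off the closed form of Lemma~\ref{lem: closed_form_bayes_posterior}, use $H(x)\to 0$ as $x\to-\infty$ and $H(x)=x+o(1)$ as $x\to+\infty$, and check that the net coefficient of $g$ in the $g\to-\infty$ regime, $\frac{\sigma_i^2}{\sigma_i^2+\gamma^2}\bigl(1-\frac{\gamma^2\sigma_i^2}{\sigma_i^2+\gamma^2+\gamma^2\sigma_i^2}\bigr)=\frac{\sigma_i^2}{\sigma_i^2+\gamma^2+\gamma^2\sigma_i^2}>0$, is exactly the computation the paper performs.

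The one place you genuinely diverge is strict monotonicity in $g$ and $\mu_i$, where the linear term and the hazard term compete. The paper sidesteps any derivative bound on $H$: it writes $e_i$ as a positive affine function of $\E[S_i \mid S_i\geq\beta_i, G_i=g]$ (via the decomposition in the proof of Lemma~\ref{lem: closed_form_bayes_posterior}) and invokes the fact that the mean of a Gaussian truncated at a fixed point is strictly increasing in the underlying mean at fixed variance; since the conditional mean of $S_i\mid G_i=g$ is increasing in both $g$ and $\mu_i$, monotonicity follows. You instead differentiate the closed form directly and reduce the claim to the uniform bound $H'(x)<1$, which you derive from $H'(x)=H(x)(H(x)-x)=1-\Var[Z\mid Z\geq x]$. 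This is valid --- and in fact, since the coefficient $\frac{\gamma^2\sigma_i^2}{\sigma_i^2+\gamma^2+\gamma^2\sigma_i^2}$ is already strictly below $1$, you only need $H'\leq 1$, i.e.\ $\Var[Z\mid Z\geq x]\geq 0$, so the ``variance strictly less than $1$'' half of your claimed bound (which would require a log-concavity/variance-reduction argument) is not actually needed. Your route makes the derivative sign fully explicit and quantitative; the paper's route avoids introducing any new fact about $H'$ beyond what Claim~\ref{clm: hazard_rate} already provides. Both are sound.
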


%\jz{Does this proof go to the Appendix?}\ar{Yes}
\begin{proof}
See Appendix \ref{app: increasing_limits}
\end{proof}

%\jz{Making this a definition so that it is more visible, since I am using the notation in quite a few places}
Finally, we define a quantity that will be useful to make reference to in a number of our forthcoming arguments: the minimum grade that results in a student from group $i$ being hired by the employer, given a fixed admissions rule.
\begin{definition}[Hiring threshold on grades]\label{def: g*}
We define $g_i^*(C) = \min \{g:~\E \left[ T_i | S_i \geq \beta_i, G_i = g \right] \geq C\}$ the inverse function of $g \to \E \left[ T_i | S_i \geq \beta_i, G_i = g \right]$.\
%ar{Maybe give this a better name, like ``hiring threshold on grades''?}\jz{Sold!}
\end{definition}
By Corollary~\ref{cor: increasing_limits}, $g^*_i(.)$ is a well-defined function on domain $\reals$, and is continuous, differentiable, and strictly increasing.

\subsection{Moments of the posterior distribution for monotone admission rules}

The following lemma holds for the general case of monotone, randomized admission rules, and is useful in characterizing the moments of the distribution of types conditional on $A_i = 1$ and $G=g$ in population $i$:
\begin{lemma}\label{lem: conditional_derivatives}
Let $A_i(.)$ be a non-decreasing, non-zero, possibly randomized admission rule. For all $g \in \mathbb{R}$, $\E \left[T_i^k| G_i=g, A_i = 1 \right]$ is finite and differentiable in $g$, and its derivative satisfies the following equation:
\begin{align*}
&\frac{\partial}{\partial g} \E_i \left[T_i^k \middle\vert G_i=g, A_i = 1 \right]
\\&= \frac{1}{\gamma^2}  \E_i \left[T_i^{k+1}\middle\vert G_i = g, A_i=1\right]
\\&- \frac{1}{\gamma^2}  \E_i \left[T_i^k \middle\vert G_i =g, A_i=1 \right] \cdot \E_i \left[T_i |G_i=g,A_i=1 \right].
\end{align*}
\end{lemma}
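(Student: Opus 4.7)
The plan is to express the conditional moment as a ratio of Gaussian-weighted integrals and differentiate under the integral sign, then simplify with the quotient rule. The essential structural observation is a conditional independence: since $G_i = T_i + Y$ with $Y \sim \mathcal{N}(0,\gamma^2)$ drawn independently of $(T_i,X)$, and since $A_i$ is a (possibly randomized) function of $S_i = T_i + X$ alone, we have $G_i \perp A_i \mid T_i$. Letting $p(t) := f_{T_i \mid A_i = 1}(t)$ (a well-defined density because $A_i$ is non-zero) and $\phi_\gamma$ denote the $\mathcal{N}(0,\gamma^2)$ density, Bayes' rule then gives
\begin{equation*}
f_{T_i \mid G_i = g,\, A_i = 1}(t) \;=\; \frac{\phi_\gamma(g - t)\, p(t)}{\int \phi_\gamma(g - t')\, p(t')\, dt'}.
\end{equation*}
Defining $N_k(g) := \int t^k\, \phi_\gamma(g - t)\, p(t)\, dt$, the conditional moment of interest is simply $\E[T_i^k \mid G_i = g, A_i = 1] = N_k(g)/N_0(g)$.

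Next I would verify that everything is finite and smooth in $g$. Since $\Pr[A_i = 1 \mid T_i = t] \le 1$, we have $p(t) \le P_i(t)/\Pr[A_i = 1]$, so $p$ inherits Gaussian tails from the prior $P_i = \mathcal{N}(\mu_i,\sigma_i^2)$. Hence $|t|^k \phi_\gamma(g - t) p(t)$ is integrable, and on any bounded neighborhood of a fixed $g$ it is dominated by a $g$-free integrable envelope of the form $c(1+|t|^k)\phi_\gamma(g_0 - t) \cdot (P_i(t)/\Pr[A_i=1])$ for some constant $c$, which handles the dominated-convergence hypothesis. Moreover $N_0(g) > 0$ because the integrand is strictly positive on a set of positive measure.

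Finally I would differentiate. Using $\partial_g \phi_\gamma(g - t) = \frac{t - g}{\gamma^2}\phi_\gamma(g - t)$ and passing the derivative under the integral (justified by the domination above) gives
\begin{equation*}
N_k'(g) \;=\; \frac{1}{\gamma^2}\bigl(N_{k+1}(g) - g\, N_k(g)\bigr).
\end{equation*}
Applying the quotient rule to $N_k/N_0$, the two $g N_k N_0$ terms cancel, leaving
\begin{equation*}
\frac{\partial}{\partial g}\frac{N_k(g)}{N_0(g)} \;=\; \frac{1}{\gamma^2}\left(\frac{N_{k+1}(g)}{N_0(g)} - \frac{N_k(g)}{N_0(g)} \cdot \frac{N_1(g)}{N_0(g)}\right),
\end{equation*}
which is exactly the lemma after re-identifying each ratio with the corresponding conditional moment. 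The only technical obstacle is the interchange of derivative and integral; it is routine because $\phi_\gamma$ decays fast enough to absorb both the polynomial weight $t^k$ and the extra linear factor $(t - g)/\gamma^2$ produced by differentiation.
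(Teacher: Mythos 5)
Your proof is correct and follows essentially the same route as the paper's: write the conditional moment as a ratio of integrals against $\phi_\gamma(g-t)$ times a fixed density proportional to $x_i(t)P_i(t)$, justify differentiation under the integral sign via a Gaussian-tail domination, and apply the quotient rule so the $gN_kN_0$ terms cancel. The only cosmetic differences are that you normalize to $p(t)=f_{T_i\mid A_i=1}(t)$ up front and use a local (bounded-neighborhood) dominating envelope where the paper uses the global bound $|t-g|e^{-(t-g)^2/2\gamma^2}\le\gamma e^{-1/2}$; both are standard and neither changes the argument.
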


\begin{proof}
The proof is given in Appendix~\ref{app: conditional_derivatives}.
\end{proof}

\section{When Both Conditions are Satisfiable}
In this section, we observe that there are two settings in which it is possible to ``have it all'' --- satisfying both IGM and equal opportunity even in the multiple threshold case. The first setting is that of noiseless exam scores: when student types are perfectly observable by the school. The second setting is when the school opts not to report grades. We view the first setting as generally unrealisable, since any student evaluation will involve some degree of stochasticity. However the 2nd case --- in which a school opts not to report grades --- can be realized.
\subsection{Noiseless Exam Scores (Observable Types)}

First, we observe that if schools can perfectly observe student types (we have noiseless exam scores with $S_i = T_i$), then there is a simple threshold admissions rule that simultaneously achieves IGM and equal opportunity, even in the multiple threshold case. The ideas is simple: Given a range of employer costs $[C^-, C^+]$, the college simply sets an admissions threshold of $C^+$ or higher, using the same threshold for members of both groups. Because the threshold is the same for both groups, the probability of being admitted to college is a function only of type, and independent of group membership conditioned on type. Because scores were noiseless, admissions to college deterministically certifies that a student's type $t_i \geq C^+$, and so the employer chooses to hire everyone, independently of the grade they receive (and independently of their group membership). Hence, the probability of being hired is the same as the probability of being accepted to college, and is independent of group membership conditioned on type, and the employer's hiring rule is independent of group membership.

%\jz{support is $\reals$}
\begin{claim}\label{clm: fairness_nonoise}
Suppose $S_i = T_i$, i.e. a student's score perfectly reveals his type. Then for any hiring interval of hiring costs $[C^-, C^+] \in \reals$, the non-zero admissions rule:
$$A_i(s) = 1 \Leftrightarrow s \geq C^+$$
 for both groups $i \in \{1,2\}$ satisfies IGM and equal opportunity when paired with \emph{any} grading policy.
\end{claim}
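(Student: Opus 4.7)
The plan hinges on the single observation that noiseless scores together with a common admissions threshold at $C^+$ make the event $A_i = 1$ literally equivalent to $T_i \geq C^+$. This immediately makes the admission probability a group-independent function of type, and certifies $T_i \geq C^+ \geq C$ deterministically for every $C \in [C^-, C^+]$, so the employer always finds it profitable to hire every admitted student. Everything else then follows without any integration or use of the closed-form Bayesian update from Lemma~\ref{lem: closed_form_bayes_posterior}.

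For equal opportunity, I would first note that $\Pr[A_i = 1 \mid T_i = t] = \mathbbm{1}\{t \geq C^+\}$ for each group. The remaining step is to show that every admitted student is hired for every realized grade $g$. For this, I would argue at the level of supports: conditional on $A_i = 1$, the type $T_i$ is supported on $[C^+, \infty)$, and since $G_i = T_i + Y$ with $Y$ an independent Gaussian, conditioning further on $G_i = g$ only reweights the conditional density on $[C^+, \infty)$ and cannot push mass outside it. Therefore
\begin{align*}
\E\left[ T_i \,\middle\vert\, G_i = g, A_i = 1 \right] \;\geq\; C^+ \;\geq\; C
\end{align*}
for every $g \in \reals$ and every $C \in [C^-, C^+]$, so the hiring indicator in the definition of equal opportunity is identically $1$ for both groups; the integrals on both sides of that definition then reduce to $\Pr[A_i = 1 \mid T_i = t]$ and agree.

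IGM in the multiple-threshold formulation then comes for free: for every $g$ and every $C \in [C^-, C^+]$, the same posterior-mean bound yields $\E[T_1 \mid G_1 = g, A_1 = 1] \geq C$ and $\E[T_2 \mid G_2 = g, A_2 = 1] \geq C$ simultaneously, so both sides of the biconditional in the IGM definition are true with both indicators equal to $1$. The only subtle point in the entire plan is the support argument used to avoid invoking any property of the grading variance $\gamma^2$; this is precisely what delivers the ``for any grading policy'' strength of the statement, and it is the step I would write most carefully.
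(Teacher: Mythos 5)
Your proposal is correct and follows essentially the same route as the paper's proof: admission certifies $T_i \geq C^+$ with probability 1, so the posterior mean conditional on any grade stays at least $C^+ \geq C$, every admitted student is hired, and both IGM and equal opportunity reduce to the group-blind admission rule. Your explicit support-reweighting argument just spells out a step the paper states without elaboration.
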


%\ar{Proof now just re-iterates sketch. Move to appendix}
\begin{proof}
See Appendix \ref{app: fairness_nonoise}.
% For $C \leq 0$ \ar{What is special about 0?}\jz{Remnant from an old statement, fixing this}, everyone always gets accepted and the result is trivial.
%Let $A_i(s) = 1$ iff $s \geq C$ and $0$ otherwise for both populations $i$.
\end{proof}

% \jz{sIGM is impossible for Gaussians even in the noiseless case, by the same proof as in the noisy case. So, this result probably belongs here.}
% \jz{Maybe no discussion of sIGM here.}

%\subsection{No grades}\jz{I filled this, this needs to be checked}

\begin{claim}
Suppose the school does not assign grades to students. Then for any hiring interval of hiring costs $[C^-, C^+] \in \reals$, the non-zero thresholding admissions rule:
$$A_i(s) = 1 \Leftrightarrow s \geq \beta$$
%\ar{Is $\beta(C^+)$ defined?}\jz{Fixed}
 for both groups $i \in \{1,2\}$ satisfies IGM and equal opportunity
 %when paired with \emph{any} grading policy
 when $\beta$ is large enough.
\end{claim}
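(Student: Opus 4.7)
The plan is to observe that when the school does not report grades, the employer's information about an admitted student from group $i$ reduces to the single number $\E[T_i \mid S_i \geq \beta]$, and to choose $\beta$ large enough that this quantity strictly exceeds $C^+$ for both groups simultaneously. Once every admitted student is hired, both fairness desiderata follow immediately.

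Concretely, I would first compute the posterior mean explicitly. Since $(T_i, S_i)$ is jointly Gaussian with $\mathrm{Cov}(T_i, S_i) = \sigma_i^2$ and $S_i \sim \mathcal{N}(\mu_i, \sigma_i^2 + 1)$, a standard truncated-Gaussian identity (obtainable either as the $\gamma \to \infty$ limit of Lemma~\ref{lem: closed_form_bayes_posterior} or by one line of linear regression combined with the truncated-normal mean formula) gives
\[
\E[T_i \mid S_i \geq \beta] = \mu_i + \frac{\sigma_i^2}{\sqrt{\sigma_i^2 + 1}}\, H\!\left( \frac{\beta - \mu_i}{\sqrt{\sigma_i^2 + 1}} \right),
\]
where $H$ denotes the standard normal hazard rate. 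By Claim~\ref{clm: hazard_rate}, $H(x) = x + o_{x \to +\infty}(1)$, so each conditional expectation tends to $+\infty$ as $\beta \to +\infty$. I would then fix $\beta^*$ large enough that both $\E[T_1 \mid S_1 \geq \beta^*]$ and $\E[T_2 \mid S_2 \geq \beta^*]$ strictly exceed $C^+$.

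For any $\beta \geq \beta^*$ and any $C \in [C^-, C^+]$, the employer's posterior about any admitted student exceeds the hiring cost, so the employer's rule collapses to ``hire if and only if admitted'' for both groups. IGM on the whole interval $[C^-, C^+]$ is then immediate, since the employer's decision depends only on the (group-blind) admission indicator. For equal opportunity, the probability that a type-$t$ student is hired equals the probability she is admitted, namely $1 - \Phi(\beta - t)$, which is a function of $t$ and $\beta$ only and does not depend on the group index $i$.

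The only step that is not purely bookkeeping is the divergence of $\E[T_i \mid S_i \geq \beta]$ as $\beta \to +\infty$, and this is handed to us by the hazard-rate asymptotics of Claim~\ref{clm: hazard_rate}; everything else is a direct consequence of the fact that, absent grades, the employer has no signal beyond admission on which to discriminate.
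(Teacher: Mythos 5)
Your proposal is correct and follows essentially the same route as the paper: derive the closed form $\E[T_i \mid S_i \geq \beta] = \mu_i + \frac{\sigma_i^2}{\sqrt{1+\sigma_i^2}} H\bigl(\frac{\beta-\mu_i}{\sqrt{1+\sigma_i^2}}\bigr)$, use the hazard-rate asymptotics of Claim~\ref{clm: hazard_rate} to push both posterior means above $C^+$, and then observe that ``hire iff admitted'' makes IGM trivial and reduces equal opportunity to the group-independent admission probability $1-\Phi(\beta-t)$. No gaps.
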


\begin{proof}
For $\beta$ big enough, $\E\left[T_i | S_i \geq \beta \right] \geq C^+$ as
\\$\lim_{\beta \to +\infty} \E \left[T_i | S_i \geq \beta\right]= +\infty$; this can be seen either by following the same steps as in the proof of Lemma \ref{lem: closed_form_bayes_posterior} to obtain that
\[
\E\left[T_i | S_i \geq \beta \right] = \mu_i + \frac{\sigma_i^2}{\sqrt{1+\sigma_i^2}} H \left(\frac{\beta_i-\mu_i}{\sqrt{1+\sigma_i^2}} \right)
\]
which tends to $+\infty$ when $\beta_i \to +\infty$ by Claim \ref{clm: hazard_rate}. Another way of deriving this expression is by noting that not having a grade is equivalent to having an uninformative grade, i.e. to having $\gamma \to +\infty$. %\jz{Is that too handwavy?}
Now, let $\beta$ be large enough such that in both populations, such that $\E\left[T_i | S_i \geq \beta \right] \geq C^+$. IGM immediately holds as every student that is accepted by the school is hired by the employer. Equal opportunity holds because the probability of a student with type $t$ being hired by the employer is exactly the probability that he is admitted by the school (every student admitted by the school is hired by the employer), hence is given by
\[
\Pr \left[ S_i \geq \beta | T_i = t\right] = \int_{s \geq \beta} \phi(s-t) dt,
\]
and is independent of the student's population.
\end{proof}

%\jz{Maybe a note of how this 1) guarantees fairness and 2) guarantees that you mostly get good students at the same time?}
Note that this result is achieved by having the school set a very high admissions threshold (uniformly for both groups), and declining to give grades. Hence, declining to give grades may be a reasonable strategy for promoting our fairness goals in a highly selective school, but does not work when admissions thresholds must be lower. We note that the practice of grade witholding in MBA programs seems to be limited to the very top programs \cite{nogrades}.

In the remainder of the paper we consider the case in which exam scores have positive finite variance, and in which the college uses a grading policy with positive finite variance. What will be possible will depend on whether we are in the single or multiple threshold case.

\section{The Single Threshold Case}
In this section, we consider what is possible when there is only a single employer with a hiring cost $C$ that is known to the college. We show that in this case, IGM can always be achieved, but that it is impossible to achieve sIGM.

% We consider the case of the prior on the types within a population being Gaussian. I.e., $P_i(t) = \phi \left( \frac{t-\mu_i}{\sigma_i}\right)$, with either $\mu_1 \neq \mu_2$ or $\sigma_1 \neq \sigma_2$ (i.e., the priors are not identical for the two populations).

\subsection{IGM can always be achieved}
The main idea is as follows: For any grading scheme, and with a single threshold $C$ in mind, the college can separately set different admissions thresholds $\beta_1^*$ and $\beta_2^*$ for the two groups respectively such that the posterior expectation for a student type from each group crosses the threshold of $C$ at a grade $g^*$, which can be made to be the same for both populations. Since the only thing that matters in the employer's hiring decision is whether or not the student's expected type is above or below $C$, this is enough to cause the employer's hiring decision to be independent of group membership. The next lemma establishes that it is always possible to find such thresholds:

\begin{lemma}\label{lem: weak_calibration}
For any $C$ in $\reals$, there exists thresholds $\beta_1^*$ and $\beta_2^*$ and a grade $g^*$ such that
\[
\E \left[T_1 | G_1 = g^*, S_1 \geq \beta_1^* \right] = \E \left[T_2 | G_2 = g^*, S_2 \geq \beta_2^* \right] = C
\]
\end{lemma}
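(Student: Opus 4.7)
The plan is to fix a candidate grade $g^{*}$ and then use the intermediate value theorem on the function $\beta \mapsto e_i(\mu_i,\sigma_i,\beta,g^{*})$ for each group separately to produce the matching admissions thresholds.

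First, recall from Corollary~\ref{cor: increasing_limits} that for every fixed $g$ the map
\[
f_i(\beta) \;=\; e_i(\mu_i,\sigma_i,\beta,g) \;=\; \E\!\left[T_i \mid S_i \geq \beta,\, G_i = g\right]
\]
is continuous and strictly increasing in $\beta$, with
\[
\lim_{\beta \to -\infty} f_i(\beta) \;=\; L_i(g) \;:=\; \frac{\gamma^{2}}{\sigma_i^{2}+\gamma^{2}}\mu_i + \frac{\sigma_i^{2}}{\sigma_i^{2}+\gamma^{2}}\, g,
\qquad
\lim_{\beta \to +\infty} f_i(\beta) \;=\; +\infty.
\]
Hence the image of $f_i$ is the open interval $\bigl(L_i(g),+\infty\bigr)$, and for every $C > L_i(g)$ there is a unique $\beta_i^{*}(g)$ with $f_i(\beta_i^{*}(g)) = C$.

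The next step is to choose a common $g^{*}$ that makes both $L_1(g^{*}) < C$ and $L_2(g^{*}) < C$ hold simultaneously. Since $L_i(g)$ is linear in $g$ with positive slope $\sigma_i^{2}/(\sigma_i^{2}+\gamma^{2})$, both strict inequalities are satisfied as soon as $g^{*}$ is sufficiently negative; explicitly, any
\[
g^{*} \;<\; \min_{i \in \{1,2\}} \frac{(\sigma_i^{2}+\gamma^{2})\,C - \gamma^{2}\mu_i}{\sigma_i^{2}}
\]
works. Fix such a $g^{*}$ and set $\beta_i^{*} := \beta_i^{*}(g^{*})$ for $i=1,2$ via the IVT argument above. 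By construction $f_1(\beta_1^{*}) = f_2(\beta_2^{*}) = C$, which is exactly the desired equality.

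I do not anticipate any real obstacle here: the main content is entirely carried by the monotonicity and the limit computations already established in Corollary~\ref{cor: increasing_limits}, so the proof is just a careful application of the intermediate value theorem together with choosing $g^{*}$ small enough to land in the common range of $f_1$ and $f_2$. The only mild care needed is to notice that the lower limit $L_i(g)$ depends on $g$ (and on the group parameters) and must be pushed below $C$ for \emph{both} groups at once, which is possible because $L_i$ tends to $-\infty$ as $g \to -\infty$.
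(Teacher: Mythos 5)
Your proof is correct and follows essentially the same route as the paper's: invoke the continuity, strict monotonicity, and $\beta$-limits of $e_i$ from Corollary~\ref{cor: increasing_limits}, pick $g^*$ negative enough that $C$ lies above the lower limit $\frac{\gamma^2}{\sigma_i^2+\gamma^2}\mu_i + \frac{\sigma_i^2}{\sigma_i^2+\gamma^2}g^*$ for both groups, and apply the intermediate value theorem in $\beta$ for each group separately. Your version is if anything slightly more explicit (giving the exact bound on $g^*$ and noting the image is the open interval), but there is no substantive difference.
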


\begin{proof}
It follows by Corollary~\ref{cor: increasing_limits} that
\[
\E \left[T_i | G_i = g, S_i \geq \beta_i \right]
\]
is continuous in $\beta_i$ and must reach any value between $\frac{\gamma^2}{\sigma_i^2 + \gamma^2} \mu_i + \frac{\sigma_i^2}{\sigma_i^2 + \gamma^2} g$ and $+\infty$. For $g^*$ small enough, it must be the case that
\[
\frac{\gamma^2}{\sigma_i^2 + \gamma^2} \mu_i + \frac{\sigma_i^2}{\sigma_i^2 + \gamma^2} g^* \leq C < +\infty,
\]
hence there exists $\beta_i^*$ such that
\[
\E \left[T_i | G_i = g^*, S_i \geq \beta_i^* \right] = C.
\]
\end{proof}

\begin{corollary}
Fix any $C$ in $\reals$. When the school uses thresholding admission rules with thresholds $\beta_1^*$ and $\beta_2^*$, IGM holds for that $C$.
\end{corollary}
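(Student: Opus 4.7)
The plan is to combine the existence result of Lemma~\ref{lem: weak_calibration} with the strict monotonicity-in-$g$ statement of Corollary~\ref{cor: increasing_limits}. IGM for a fixed $C$ asserts that for every grade $g$, the indicator $\mathbbm{1}\{\E[T_i\mid G_i=g,S_i\geq \beta_i^*]\geq C\}$ is the same across populations $i\in\{1,2\}$. So it suffices to show that both posterior expectations cross $C$ at the same grade, and that they are each monotone in $g$ so that the sublevel and superlevel sets are simply half-lines.

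First, invoke Lemma~\ref{lem: weak_calibration} to obtain thresholds $\beta_1^*,\beta_2^*$ and a common grade $g^*$ satisfying
\[
\E[T_1\mid G_1=g^*,S_1\geq \beta_1^*] = \E[T_2\mid G_2=g^*,S_2\geq \beta_2^*] = C.
\]
Then apply Corollary~\ref{cor: increasing_limits}, which says that for each $i$ the map $g\mapsto e_i(\mu_i,\sigma_i,\beta_i^*,g)=\E[T_i\mid G_i=g,S_i\geq\beta_i^*]$ is continuous and strictly increasing on $\reals$. Strict monotonicity ensures that $g^*$ is the \emph{unique} grade at which the posterior expectation equals $C$, and that $\E[T_i\mid G_i=g,S_i\geq\beta_i^*]\geq C$ if and only if $g\geq g^*$. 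Since this characterization holds with the \emph{same} threshold $g^*$ for both $i=1$ and $i=2$, the two indicators agree for every $g\in\reals$, which is exactly the IGM condition.

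There is no real obstacle here beyond assembling these two ingredients; the content is essentially packaged in the preceding lemma and corollary. The only thing to be careful about is checking that the equivalence $\{e_i(\cdot)\geq C\}=\{g\geq g^*\}$ follows from strict monotonicity plus continuity (so that the level $C$ is attained) rather than mere monotonicity, and that $g^*$ does not depend on $i$, which is precisely what Lemma~\ref{lem: weak_calibration} secured by choosing both $\beta_1^*$ and $\beta_2^*$ to calibrate the crossing point to the same $g^*$.
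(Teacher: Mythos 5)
Your proposal is correct and follows exactly the paper's argument: use Lemma~\ref{lem: weak_calibration} to get a common crossing grade $g^*$ and then the strict monotonicity in $g$ from Corollary~\ref{cor: increasing_limits} to conclude that the hiring set is $\{g \geq g^*\}$ for both populations. You simply spell out the details (uniqueness of the crossing point, the half-line characterization) a bit more explicitly than the paper does.
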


\begin{proof}
$\E \left[T_i | G_i = g, S_i \geq \beta_i^* \right]$ is a strictly increasing function of $g$ by Corollary~\ref{cor: increasing_limits} , therefore the employer accepts students from any population if and only if $g \geq g^*$ where $g^*$ is population-independent, which proves the results.
\end{proof}

\subsection{sIGM is impossible}
We now show that strong IGM --- making the posterior distributions for both groups identical --- is impossible. In addition to its intrinsic interest, this result will be a key ingredient in our impossibility results for the multiple threshold setting.
\begin{lemma}\label{lem: strong_calib_gaussian}
Suppose the priors are distinct. For any two thresholds $\beta_1$ and $\beta_2$, there must exists $t \in \reals$ such that
\[
\Pr \left[T_1 = t | S_1 \geq \beta_1 \right] \neq \Pr \left[T_2 = t | S_2 \geq \beta_2 \right]
\]
I.e., sIGM cannot hold. %\ar{Is this statement complete? Don't we also need to make this statement after conditioning on grades?}\jz{As per the newly added Claim \ref{clm: sIGM_equivalence}, either is fine.}
\end{lemma}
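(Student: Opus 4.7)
The approach is proof by contradiction, exploiting the Gaussian tail behavior of the posterior densities. Suppose $T_i \sim \mathcal{N}(\mu_i,\sigma_i^2)$ and recall $S_i = T_i + X$ with $X \sim \mathcal{N}(0,1)$ independent of $T_i$. By Bayes' rule, the posterior density of $T_i$ given $S_i \geq \beta_i$ is
\[
f_i(t) \;=\; \frac{1}{K_i} \cdot \frac{1}{\sigma_i}\,\phi\!\left(\tfrac{t-\mu_i}{\sigma_i}\right) \cdot \bigl(1 - \Phi(\beta_i - t)\bigr),
\qquad K_i := \Pr[S_i \geq \beta_i],
\]
because $\Pr[S_i \geq \beta_i \mid T_i = t] = \Pr[X \geq \beta_i - t] = 1 - \Phi(\beta_i - t)$. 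Note $K_i \in (0,1)$ is a positive normalizing constant.

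Assume for contradiction that sIGM holds, i.e.\ $f_1(t) = f_2(t)$ for every $t \in \reals$. Taking the ratio and isolating the Gaussian factor gives
\[
\frac{\tfrac{1}{\sigma_1}\phi\!\left(\tfrac{t-\mu_1}{\sigma_1}\right)}{\tfrac{1}{\sigma_2}\phi\!\left(\tfrac{t-\mu_2}{\sigma_2}\right)}
\;=\; \frac{K_1}{K_2} \cdot \frac{1 - \Phi(\beta_2 - t)}{1 - \Phi(\beta_1 - t)}
\qquad \text{for all } t \in \reals.
\]
I would then study the $t \to +\infty$ limit: for any fixed $\beta_i$, $1 - \Phi(\beta_i - t) \to 1$, so the right-hand side tends to the positive finite constant $K_1/K_2$. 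Therefore the left-hand side must also tend to this finite nonzero constant.

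The heart of the proof is showing this is impossible when $P_1 \neq P_2$. Explicitly, the left-hand side equals
\[
\frac{\sigma_2}{\sigma_1}\exp\!\left(\frac{(t-\mu_2)^2}{2\sigma_2^2} - \frac{(t-\mu_1)^2}{2\sigma_1^2}\right),
\]
whose exponent is a polynomial in $t$ of degree at most $2$, with leading coefficient $\tfrac{1}{2\sigma_2^2} - \tfrac{1}{2\sigma_1^2}$ in the quadratic term and $\tfrac{\mu_1}{\sigma_1^2} - \tfrac{\mu_2}{\sigma_2^2}$ in the linear term. I would now split into two cases. If $\sigma_1 \neq \sigma_2$, the quadratic coefficient is nonzero, so the exponent diverges to $\pm\infty$ as $t\to+\infty$, and the left-hand side diverges to $0$ or $+\infty$ --- contradiction. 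If $\sigma_1 = \sigma_2 = \sigma$, the exponent simplifies to $\tfrac{(\mu_1-\mu_2)}{\sigma^2}\,t + \mathrm{const}$; since $P_1 \neq P_2$ forces $\mu_1 \neq \mu_2$, the linear coefficient is nonzero, so again the exponent diverges. Either way we contradict the convergence to the finite constant $K_1/K_2$, completing the proof.

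The only subtlety is verifying that the right-hand side really tends to a nonzero finite limit --- which follows because $K_1, K_2 > 0$ (both admissions rules are nondegenerate thresholds on real-valued Gaussians) and $1 - \Phi(\beta_i - t) \to 1$ as $t \to +\infty$. No machinery beyond elementary asymptotics of $\phi$ and $\Phi$ is required; the Gaussian tails do the work.
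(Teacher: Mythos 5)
Your proposal is correct and follows essentially the same route as the paper: form the ratio of the two posterior densities conditional on admission, note that the admission-probability factors $1-\Phi(\beta_i-t)$ contribute a finite positive limit as $t\to+\infty$ (the paper phrases this via the monotone, nonzero, $[0,1]$-valued functions $x_i(t)$), and derive a contradiction from the divergence of the Gaussian ratio when the priors differ in mean or variance. The only cosmetic difference is that you compute the limit of the threshold factor explicitly as $1$, while the paper argues it abstractly from monotonicity and nonzeroness.
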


\begin{proof}
Let $x_i(t) = \Pr \left[ S_i \geq \beta_i | T_i = t\right]$. Suppose for all $t \in \reals$, sIGM holds, i.e.
\[
\Pr \left[T_1 = t | S_1 \geq \beta_1 \right] \neq \Pr \left[T_2 = t | S_2 \geq \beta_2 \right]
\]
by Claim~\ref{clm: sIGM_equivalence}. Then
\[
\frac{x_1(t) \phi\left(\frac{t-\mu_1}{\sigma_1}\right)}{\Pr \left[S_1 \geq \beta_1  \right]} = \frac{x_2(t) \phi \left(\frac{t-\mu_2}{\sigma_2}\right)}{\Pr \left[S_2 \geq \beta_2 \right]}
\]
hence
\[
\frac{x_1(t)}{x_2(t)} = \frac{\sigma_1 \Pr \left[S_1 \geq \beta_1 \right]}{\sigma_2 \Pr \left[S_2 \geq \beta_2 \right]} \cdot \exp\left( \frac{(t-\mu_2)^2}{2\sigma_2}  - \frac{(t-\mu_1)^2}{2\sigma_1^2} \right)
\]
$x_1(.)$ and $x_2(.)$ are non-decreasing functions with values in $[0,1]$, and $x_i(t) = \int_{s \geq \beta_i} \phi(s-t) ds$ is non-zero; therefore, $\lim_{t = +\infty} x_i(t)$ exists and is strictly positive. It must then be the case that $\frac{x_1(t)}{x_2(t)}$ has a finite and strictly positive limit in $+\infty$. On the other hand,
\[
\exp\left( \frac{(t-\mu_2)^2}{2\sigma_2}  - \frac{(t-\mu_1)^2}{2\sigma_1^2} \right) = K \exp \left(\frac{t^2}{2} \left(\frac{1}{\sigma_2^2}-\frac{1}{\sigma_1^2}\right) + \left( \frac{\mu_1}{\sigma_1^2} - \frac{\mu_2}{\sigma_2^2}\right) t \right)
\]
for some constant $K$. It is easy to see that the above quantity tends to either $+\infty$ or $0$ as $t \to +\infty$ as long as either $\sigma_1 \neq \sigma_2$ or $\mu_1 \neq \mu_2$ (one of $\frac{1}{\sigma_2^2}-\frac{1}{\sigma_1^2}$ and $\frac{\mu_1}{\sigma_1^2} - \frac{\mu_2}{\sigma_2^2}$ must be non-zero). This leads to a contradiction.
\end{proof}

\subsection{Equal opportunity}

%\ar{We should probably move this to the appendix, and describe it as a partial result supporting our conjecture that equal opportunity is impossible in the single threshold case.}\jz{Ok, update below discussion}

We defer the technical results of this section to Appendix~\ref{app: equalodd_partial}.
Lemma~\ref{lem: equalodd_partial_1} shows that for thresholding admission rules, IGM and equal opportunity cannot simultaneously hold for Gaussian priors with the same variance but different mean. This shows that obtaining fairness in the general case is significantly more difficult than in the simple cases in which the types are observable and the school does not assign grades.
Lemma~\ref{lem: equalodd_partial_2} shows that arguably stringent conditions on the grade accuracy and the thresholds set by the school must hold for equal opportunity to be possible. We conjecture that these conditions are, in general, impossible to satisfy, making equal opportunity impossible to satisfy even in isolation, in the single threshold case. As we will see in the next section, it is impossible to satisfy in the multiple-threshold case.

\section{The Multiple Threshold Case}
In this section, we turn to the multiple threshold case, which we view as the main setting of interest. In this case, we ask whether we can achieve IGM and equal opportunity not just with respect to a single known hiring cost $C$, but with respect to an entire interval of hiring costs $C \in [C^-, C^+]$. This will be the case when there are multiple employers, or simply when there is some uncertainty about the hiring threshold used by a single employer.

\subsection{IGM is Impossible}
In this section, we show that IGM is impossible to achieve even in isolation. The proof proceeds by showing that in the multiple threshold case, IGM must imply sIGM --- i.e. that the posterior distributions conditional on admission to college are identical for both groups. Impossibility then follows from the impossibility of achieving sIGM (even for a single threshold), which we proved in the last section.

We first state a technical lemma, showing that if we satisfy IGM for every employer cost $C$ in a continuous interval, we must actually be equalizing the posterior expected type across groups for every grade $g$ in some other continuous interval.
%\jz{The claim here relates IGM for thresholds in an interval to robust IGM for grades in an interval}\jz{This claim is only useful for the proof of the lemma of this section, so this could be moved to Appendix with the proof.}

\begin{claim}\label{clm: IGM_threshold_to_grade}
Let $\ubar{C} < \bar{C}$. Suppose that for all $C \in (\ubar{C},\bar{C})$,
\[
\E\left[T_1|G_1=g, S_1 \geq \beta_1\right] \geq C \Leftrightarrow \E\left[T_2|G_2=g, S_2 \geq \beta_2\right] \geq C,
\]
then it must be the case that for $g$ in some interval $(a,b)$,
\[
\E\left[T_1|G_1=g, S_1 \geq \beta_1\right] = \E\left[T_2|G_2=g, S_2 \geq \beta_2\right]
\]
%\ar{Important that these intervals be open?}\jz{not really. This is a teeny bit more general as it allows the intervals to have $\infty$ in them}
\end{claim}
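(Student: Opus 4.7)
The plan is to leverage the continuity and strict monotonicity in $g$ of the posterior expectation functions
\[
f_i(g) := \E\left[T_i \mid G_i=g, S_i \geq \beta_i\right], \quad i \in \{1,2\},
\]
as established in Corollary~\ref{cor: increasing_limits}, in order to translate an IGM condition that holds across a continuum of hiring costs $C$ into an equality of posterior expectations over a corresponding continuum of grades $g$.

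First I would record the relevant structural facts about $f_i$. By Corollary~\ref{cor: increasing_limits}, each $f_i$ is continuous, strictly increasing in $g$, with $\lim_{g \to -\infty} f_i(g) = -\infty$ and $\lim_{g \to +\infty} f_i(g) = +\infty$. Hence $f_i$ is a bijection from $\reals$ onto $\reals$, and the inverse function $g_i^*(\cdot)$ (as introduced in Definition~\ref{def: g*}) is also continuous and strictly increasing on all of $\reals$.

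Next, I would reformulate the IGM hypothesis. Because $f_i$ is strictly increasing, the inequality $f_i(g) \geq C$ is equivalent to $g \geq g_i^*(C)$. So for every $C \in (\underline{C}, \bar{C})$, the IGM condition says
\[
\{g \in \reals : g \geq g_1^*(C)\} = \{g \in \reals : g \geq g_2^*(C)\},
\]
which forces $g_1^*(C) = g_2^*(C)$ for every $C \in (\underline{C}, \bar{C})$. Now set $a := \lim_{C \downarrow \underline{C}} g_1^*(C)$ and $b := \lim_{C \uparrow \bar{C}} g_1^*(C)$, both well-defined (possibly $\pm \infty$) by monotonicity and continuity of $g_1^*$. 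As $C$ ranges over $(\underline{C}, \bar{C})$, the common value $g = g_1^*(C) = g_2^*(C)$ traces out exactly the open interval $(a,b)$. Applying $f_i$ (using $f_i \circ g_i^* = \mathrm{id}$) yields, for each $g \in (a,b)$, that $f_1(g) = C = f_2(g)$, which is the desired conclusion.

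I do not anticipate any substantive obstacle: the entire argument is a straightforward translation, the only subtle point being to make sure to invoke both the continuity \emph{and} the strict monotonicity of $f_i$ so that the equivalence of sublevel sets promotes into an equality of thresholds $g_i^*(C)$, and then to verify that $g_1^*$ maps the open interval $(\underline{C}, \bar{C})$ onto a nonempty open interval $(a,b)$ — which is immediate from continuity and strict monotonicity. All of these ingredients are already packaged in Corollary~\ref{cor: increasing_limits}, so no new technical machinery is needed.
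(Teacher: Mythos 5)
Your proof is correct and takes essentially the same approach as the paper: both arguments rest on Corollary~\ref{cor: increasing_limits} and the inverse $g_i^*$, and both identify the interval as $(a,b)=\bigl(g_1^*(\ubar{C}),\,g_1^*(\bar{C})\bigr)$. The only difference is cosmetic — you first deduce $g_1^*(C)=g_2^*(C)$ on $(\ubar{C},\bar{C})$ and push forward through $f_i$, whereas the paper argues directly by contradiction on the values of the two posterior expectations at a grade $g\in(a,b)$.
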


\begin{proof}
Let $a = g_1^*(\ubar{C})$ and $b = g_1^*(\bar{C})$ where $g_1^*(.)$ is the strictly increasing inverse of $g \to \E \left[T_1 | G_1 = g, A_1 = 1 \right]$ as per Claim \ref{cor: increasing_limits} and Definition \ref{def: g*}. Suppose there exists $g \in (a,b)$ such that
\[
\E\left[T_1|A_1=1,G_1=g\right] > \E\left[T_2|A_2=1,G_2=g\right],
\]
then for $C = \E\left[T_1|A_1=1,G_1=g\right] \in (\ubar{C},\bar{C})$, it must be the case that
\[
\E\left[T_1|A_1=1,G_1=g\right] \geq C > \E\left[T_2|A_2=1,G_2=g\right]
\]
which contradicts the assumption of the claim. Now, suppose there exists $g \in (a,b)$ such that
\[
\E\left[T_1|A_1=1,G_1=g\right] < \E\left[T_2|A_2=1,G_2=g\right],
\]
then for $\epsilon > 0$ small enough, $C = \E\left[T_1|A_1=1,G_1=g\right]+\epsilon \in (\ubar{C},\bar{C})$ and
\[
\E\left[T_1|A_1=1,G_1=g\right] < C \leq \E\left[T_2|A_2=1,G_2=g\right]
\]
which also contradicts the assumption of the claim. Therefore, it must be the case that for all $g \in (a,b)$,
\[
\E\left[T_1|A_1=1,G_1=g\right] = \E\left[T_2|A_2=1,G_2=g\right]
\]
\end{proof}

We can now go on to prove the main theorem in this section:
%\ar{Promote to theorem?}\jz{changed}
\begin{theorem}\label{thm: robust_IGM}
Suppose the priors are distinct,
%\ar{Do we need to say this any more? Isn't our setting defined to have gaussian priors? But we should probably say that the priors are distinct.}\jz{changed}
 then IGM cannot for all hiring costs $C \in (\ubar{C},\bar{C})$.
%Then, it must be the case that
% \[
% \Pr \left[T_1= t |S_1 \geq \beta_1 \right] = \Pr \left[T_2= t |S_2 \geq \beta_2 \right].
% \]
% I.e., sIGM must hold. Therefore, IGM cannot hold.
\end{theorem}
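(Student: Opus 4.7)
The plan is to show that IGM on a nontrivial interval of hiring costs forces sIGM, and then invoke the impossibility of sIGM (Lemma~\ref{lem: strong_calib_gaussian}). Assume for contradiction that IGM holds for every $C \in (\underline{C}, \bar{C})$. By Claim~\ref{clm: IGM_threshold_to_grade}, there is an interval of grades $(a,b)$ on which the posterior expectations coincide: $\E[T_1 \mid G_1 = g, A_1 = 1] = \E[T_2 \mid G_2 = g, A_2 = 1]$. My first step is to upgrade this to equality for all $g \in \reals$. The posterior expectation $e_i(g) = \E[T_i \mid G_i = g, A_i = 1]$ is analytic in $g$: this is visible directly from the closed-form of Lemma~\ref{lem: closed_form_bayes_posterior} (for thresholding admissions), and more generally from the fact that $e_i(g)$ is a ratio of integrals whose integrands depend analytically on $g$ through the Gaussian factor $\phi_\gamma(g-t)$. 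Two analytic functions agreeing on $(a,b)$ must agree everywhere, so $e_1 \equiv e_2$ on $\reals$.

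Next, I would bootstrap from equal first moments to equal full conditional distributions using Lemma~\ref{lem: conditional_derivatives}. Inductively, suppose $\E[T_1^j \mid G_1 = g, A_1 = 1] = \E[T_2^j \mid G_2 = g, A_2 = 1]$ for all $j \leq k$ and all $g$. Differentiating in $g$ and applying the lemma to solve for the $(k{+}1)$-st moment yields
\[
\E[T_i^{k+1} \mid G_i=g, A_i=1] = \gamma^2 \tfrac{\partial}{\partial g} \E[T_i^k \mid G_i=g, A_i=1] + \E[T_i^k \mid G_i=g, A_i=1] \cdot e_i(g),
\]
and since every term on the right is equal across the two groups, so is the left-hand side. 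The posterior density of $T_i$ given $G_i=g, A_i=1$ is proportional to $\phi_{P_i}(t)\,\phi_\gamma(g-t)\,x_i(t)$, which, because $x_i \in [0,1]$, has sub-Gaussian tails; the moment problem is therefore determinate, and matching moments at every $g$ forces the conditional laws themselves to match at every $g$.

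Finally, I would extract sIGM via a short Bayes-rule argument. Because $G_i \mid T_i=t$ is $\mathcal{N}(t,\gamma^2)$ and conditionally independent of $A_i$, Bayes' rule gives
\[
p(T_i = t \mid G_i=g, A_i=1) = \frac{\phi_\gamma(g-t)\, p(T_i=t \mid A_i=1)}{p(G_i=g \mid A_i=1)}.
\]
Equating the $i=1$ and $i=2$ expressions and cancelling $\phi_\gamma(g-t)$ shows that the ratio $p(T_1=t \mid A_1=1) / p(T_2=t \mid A_2=1)$ does not depend on $t$; normalising yields $p(T_1=t \mid A_1=1) = p(T_2=t \mid A_2=1)$, which is sIGM by Claim~\ref{clm: sIGM_equivalence}. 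This contradicts Lemma~\ref{lem: strong_calib_gaussian}.

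The main obstacle is the middle step: justifying the passage from ``first moments match on an interval'' to ``full posterior distributions match everywhere.'' Analytic extension and moment-determinacy each hinge on technical regularity of the posterior that must be checked carefully, but both hold essentially because multiplying a Gaussian by a bounded function $x_i(t) \in [0,1]$ preserves sub-Gaussian tails and analyticity in $g$.
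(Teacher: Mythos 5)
Your proposal is correct and follows essentially the same route as the paper's proof: Claim~\ref{clm: IGM_threshold_to_grade} to get equal posterior means on a grade interval, the moment recursion of Lemma~\ref{lem: conditional_derivatives} plus moment determinacy (the paper cites existence of the moment generating function) to match the full conditional laws, and the Bayes-rule normalization argument to deduce sIGM and contradict Lemma~\ref{lem: strong_calib_gaussian}. The analytic-continuation step you insert to extend equality of means from $(a,b)$ to all of $\reals$ is unnecessary — the induction and the final Bayes argument go through working only on the open interval $(a,b)$, which is what the paper does.
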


\begin{proof}
By Claim \ref{clm: IGM_threshold_to_grade}, it must be the case that for all $g \in (a,b)$ for some interval $(a,b)$, $\E\left[T_1|G_1=g, S_1 \geq \beta_1\right] = \E\left[T_2|G_2=g, S_2 \geq \beta_2 \right]$. For all $g$ in $(a,b)$, by Lemma~\ref{lem: conditional_derivatives}
\[
\frac{\partial}{\partial g} \E \left[T_1 | G_1=g, S_1 \geq \beta_1 \right] = \frac{\partial}{\partial g} \E \left[T_2 | G_2=g, S_2 \geq \beta_2 \right]
\]
and hence $E\left[T_1^2|G_1=g, S_1 \geq \beta_1\right] = \E\left[T_2^2|G_2=g, S_2 \geq \beta_2\right]$. It is easy to see that using the same argument by induction yields that for all integers $k$,
\[
\E\left[T_1^k|G_1=g, S_1 \geq \beta_1\right] = \E\left[T_2^k|G_2=g,S_1 \geq \beta_1\right]
\]
Since the distributions of types for the two populations conditional on $G_i=g, S_i \geq \beta_i$ admit a moment generating function (this follows immediately from the fact that $P_i$ admits a moment generating function) and have identical moments, it must be that the distributions are the same for all $g \in (a,b)$. I.e., for all $g \in (a,b)$, we have
\[
\Pr \left[T_1 = t |G_1=g, S_1 \geq \beta_1\right] = \Pr \left[T_2 = t |G_2=g, S_1 \geq \beta_1\right]
\]
We have that in population $i$,
\[
\Pr \left[T_i = t | G_i=g, S_i \geq \beta_i \right] = \frac{\Pr \left[T_i = t | S_i \geq \beta_i \right]  \phi \left(\frac{g-t}{\gamma} \right)}{\int_t \Pr \left[T_i = t | S_i \geq \beta_i \right] \phi \left(\frac{g-t}{\gamma} \right) dt}
\]
Note that $\int_t \Pr \left[T_i = t | S_i \geq \beta_i \right] \phi \left(\frac{g-t}{\gamma} \right) dt$ is a function of $g$ only, that we will denote $p_i(g)$ from now on.
\[
\Pr \left[T_1 = t | G_1=g, S_1 \geq \beta_i \right] = \Pr \left[T_2 = t | G_2=g, S_2 \geq \beta_2 \right]
\]
implies
\[
\frac{\Pr \left[T_1 = t | S_1 \geq \beta_1 \right]}{\Pr \left[T_2 = t | S_2 \geq \beta_2 \right]} = \frac{p_1(g)}{p_2(g)}
\]
for all $g \in (a,b)$ and $t \in \reals$. $\Pr \left[T_1 = t | S_1 \geq \beta_1 \right]$ and $\Pr \left[T_2 = t | S_2 \geq \beta_2 \right]$ are both probability density functions that integrate to $1$, so it must be the case that $\frac{p_1(g)}{p_2(g)} = 1$ and $\Pr \left[T_1 = t | S_1 \geq \beta_1 \right] = \Pr \left[T_2 = t | S_2 \geq \beta_2 \right]$. Therefore, sIGM must hold, which we have shown is impossible in Lemma \ref{lem: strong_calib_gaussian}.
\end{proof}

\subsection{Equal opportunity cannot hold}
Finally, we show that in the multiple threshold case, it is also impossible to satisfy the equal opportunity condition.

%\ar{Promote to theorem?}\jz{done}
\begin{theorem}\label{thm: impossibility_equalodds}
Suppose the priors are distinct. There exist no thresholding admission rules such that equal opportunity is guaranteed for all $C \in (\ubar{C},\bar{C})$, for any $\ubar{C} < \bar{C}$.
%\ar{We should be consistent about the hypotheses we state in the theorems. e.g. here we specify threshold admission rule, but not everywhere.}\jz{I'm just stating distinct now.}
\end{theorem}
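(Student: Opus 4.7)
The plan is to convert the equal-opportunity condition into a separated-variables functional equation in $(t,C)$ that pins down the two admissions thresholds as well as the induced hiring thresholds on grades, and then invoke the closed form from Lemma~\ref{lem: closed_form_bayes_posterior} to contradict $P_1 \neq P_2$. We may assume $\beta_1, \beta_2 < \infty$, since otherwise nobody is admitted in at least one population and (by Corollary~\ref{cor: increasing_limits}, whose range is all of $\reals$ for any finite $\beta_i$) equal opportunity is trivially violated for the other population for any finite $C$.

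For a thresholding rule, $S_i$ and $G_i$ are conditionally independent given $T_i$, so $\Pr[G_i = g, A_i = 1 \mid T_i = t] = x_i(t)\,\phi_\gamma(g-t)$ where $x_i(t) = 1 - \Phi(\beta_i - t)$ and $\phi_\gamma$ denotes the centered normal density with variance $\gamma^2$. By Corollary~\ref{cor: increasing_limits}, $g \mapsto \E[T_i \mid S_i \geq \beta_i, G_i = g]$ is strictly increasing, so the indicator inside the equal-opportunity integral reduces to $\mathbbm{1}\{g \geq g_i^*(C)\}$ (Definition~\ref{def: g*}). The equal-opportunity condition at type $t$ therefore becomes
\[
x_1(t) \int_{g_1^*(C)}^{+\infty} \phi_\gamma(g - t)\, dg \;=\; x_2(t) \int_{g_2^*(C)}^{+\infty} \phi_\gamma(g - t)\, dg,
\]
and must hold for every $t \in \reals$ and every $C \in (\ubar{C}, \bar{C})$.

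I would then differentiate this identity with respect to $C$ (valid since each $g_i^*$ is $C^1$ with strictly positive derivative by Corollary~\ref{cor: increasing_limits}) and expand the resulting Gaussian ratio to obtain
\[
\frac{x_1(t)}{x_2(t)} \;=\; F(C)\, e^{\alpha(C)\, t},
\]
where $\alpha(C) = -(g_1^*(C) - g_2^*(C))/\gamma^2$ and $F(C) = \lambda(C)\exp\!\bigl((g_1^*(C) - g_2^*(C))(g_1^*(C) + g_2^*(C))/(2\gamma^2)\bigr)$ with $\lambda(C) = (dg_2^*/dC)/(dg_1^*/dC)$. Since the left side is independent of $C$, evaluating at $t = 0$ makes $F(C)$ constant, and then evaluating at any $t \neq 0$ makes $\alpha(C)$ constant. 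So $\Delta := g_1^*(C) - g_2^*(C)$ is constant on the interval, whence $dg_1^*/dC = dg_2^*/dC$ and $\lambda \equiv 1$; since $g_1^*(C) + g_2^*(C)$ is strictly increasing in $C$, constancy of $F(C)$ then forces $\Delta = 0$. Combined with $\alpha = 0$, $x_1(t)/x_2(t)$ is constant in $t$, and letting $t \to +\infty$ (so $x_i(t) \to 1$) pins the constant to $1$. Hence $\beta_1 = \beta_2 =: \beta$ and $g_1^*(C) = g_2^*(C)$ on the entire interval.

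To close, the two functions $g \mapsto \E[T_i \mid S_i \geq \beta, G_i = g]$ now agree on the nontrivial interval of $g$-values traced by the common $g_1^*(C) = g_2^*(C)$. By the closed form of Lemma~\ref{lem: closed_form_bayes_posterior} they are real-analytic in $g$ (since $H$ is a ratio of entire functions with strictly positive denominator), so they agree on all of $\reals$. Letting $g \to +\infty$ and using $H(x) \to 0$ as $x \to -\infty$ from Claim~\ref{clm: hazard_rate}, the hazard term vanishes and the posterior mean is asymptotic to $\tfrac{\sigma_i^2}{\sigma_i^2 + \gamma^2} g + \tfrac{\gamma^2}{\sigma_i^2 + \gamma^2} \mu_i$. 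Matching slopes forces $\sigma_1 = \sigma_2$ by strict monotonicity of $\sigma^2 \mapsto \sigma^2/(\sigma^2+\gamma^2)$, and matching intercepts then forces $\mu_1 = \mu_2$, contradicting $P_1 \neq P_2$. The main obstacle is the separation-of-variables step: carefully disentangling the $t$- and $C$-dependencies to conclude that the two admissions thresholds must coincide; once this is in hand, analyticity plus the asymptotics of Claim~\ref{clm: hazard_rate} finish things essentially mechanically.
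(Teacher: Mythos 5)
Your proof is correct, and while its first half tracks the paper's argument, the second half takes a genuinely different route. Both you and the paper write the hiring probability as $x_i(t)\bigl(1-\Phi((g_i^*(C)-t)/\gamma)\bigr)$ and differentiate the equal-opportunity identity in $C$; but where the paper fixes a single $C$ with $g_1^*(C)\neq g_2^*(C)$ and derives a contradiction from the $t\to+\infty$ blow-up of the Gaussian density ratio against the boundedness of $x_1/x_2$, you instead exploit the full two-parameter identity via separation of variables in $(t,C)$, which yields not only $g_1^*\equiv g_2^*$ on the interval but also the extra conclusion $\beta_1=\beta_2$. This extra information lets you finish in a self-contained way: the two posterior-mean functions agree on a nontrivial $g$-interval, are real-analytic by the closed form of Lemma~\ref{lem: closed_form_bayes_posterior}, hence agree everywhere, and their linear asymptotics as $g\to+\infty$ force $\sigma_1=\sigma_2$ and then $\mu_1=\mu_2$. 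The paper instead stops at ``IGM holds for all $C$'' and invokes Theorem~\ref{thm: robust_IGM}, which rests on the moment recursion of Lemma~\ref{lem: conditional_derivatives}, the moment-generating-function argument, and the sIGM impossibility of Lemma~\ref{lem: strong_calib_gaussian}. The trade-off: the paper's reduction is modular and its machinery extends to arbitrary monotone randomized rules, whereas your closing leans on the explicit Gaussian formula for thresholding rules but entirely bypasses the moment-matching apparatus and is arguably more elementary for this theorem. One small caveat on your opening remark: if \emph{both} thresholds are $+\infty$ (nobody admitted anywhere), equal opportunity holds vacuously rather than being ``trivially violated''; like the paper, the theorem should be read as restricted to non-zero admission rules, under which your argument is complete.
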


\begin{proof}
It is easy to see $x_i(t) = \int_s A_i(s) \phi(s-t) ds = \int_u A_i(u+t) \phi(u) du$ is monotone non-decreasing in $t$ and non-zero. Remember
\[
e_i(g) = \E \left[T_i | G_i = g, S_i \geq \beta_i \right]
\]
has a strictly increasing and differentiable inverse $g^*_i(.)$ on $(-\infty,+\infty)$ by Corollary~\ref{cor: increasing_limits}, and a student is hired by the employer if and only if $g \geq g^*_i(C)$. A student with type $t$ in population $i$ gets therefore hired with probability
\[
\int_{g \geq g^*(C)} x_i(t) \phi \left(\frac{g-t}{\gamma} \right) dt = x_i(t) \left( 1 - \Phi \left(\frac{g^*_i(C)-t}{\gamma} \right) \right)
\]
equal opportunity then imply that $\forall t \in \reals, C \in (\ubar{C},\bar{C})$,
\[
\frac{x_1(t)}{x_2(t)} \cdot  \left( 1 - \Phi \left(\frac{g^*_1(C)-t}{\gamma} \right) \right)
=  \left( 1 - \Phi \left(\frac{g^*_2(C)-t}{\gamma} \right) \right)
\]
Taking the first order derivative in $C$ of both sides of the above equation, we have that for all $C \in (\ubar{C},\bar{C})$, for all $t \in \reals$,
\[
\frac{x_1(t)}{x_2(t)} \cdot \frac{\frac{\partial g_1^*}{\partial C}(C)}{\frac{\partial g_2^*}{\partial C}(C)} = \frac{\phi \left(\frac{g^*_2(C)-t}{\gamma}\right)}{\phi \left( \frac{g^*_1(C)-t}{\gamma}\right)}
\]
Suppose for some $C \in (\ubar{C},\bar{C})$, $g^*_1(C) \neq g^*_2(C)$. Without loss of generality, renumber the populations such that $g^*_2(C) > g^*_1(C)$. We have that
\[
\frac{\phi \left(\frac{g^*_2(C)-t}{\gamma}\right)}{\phi \left( \frac{g^*_1(C)-t}{\gamma}\right)} = \exp \left(\frac{2 \left( g_2^*(C) - g_1^*(C)\right) t + g_1^*(C)^2-g_2^*(C)^2}{2 \gamma^2} \right)
\]
and we know that $g_i^*(.)$ is a strictly increasing function so $\frac{\partial g_2^*}{\partial C}(C) > 0$ so it must be the case that
\[
\lim_{t \to +\infty} \frac{x_1(t)}{x_2(t)} = +\infty.
\]
Since $x_1(t)$ is upper-bounded by $1$, this implies in particular that $x_2(t) \to 0$ as $t \to +\infty$, which contradicts $x_2(.)$ being a non-zero, non-decreasing function. Hence, it must be the case that for all $C \in (\ubar{C},\bar{C})$, $g_1^*(C) = g_2^*(C)$, i.e. IGM holds. By Lemma~\ref{thm: robust_IGM}, this is impossible.
\end{proof}

\section{Conclusion}
We consider two natural fairness goals that a college might have for its affirmative action policies: granting equal opportunity to individuals with the same type when graduating from high school, independent of their group membership, and incentivizing downstream employers to make hiring decisions that are independent of group membership. We show that these goals can be simultaneously achieved by highly selective colleges (i.e. those with very high admissions thresholds) --- \emph{but only if they do not report grades to employers}. This provides another view on this practice, which is followed by several highly selective MBA programs. On the other hand, we find that these goals are generally unachievable even in isolation if schools report informative grades. These impossibility results crucially hinge on the fact that exam scores and grades provide only \emph{noisy} signals about student types, and hence require rational expectation maximizers to reason about prior type distributions, which can vary by group.

Our paper leaves open a natural technical question: can a college set admissions and informative grading policies to realize the equal opportunity condition, in the \emph{single threshold} case? We conjecture that the answer to this question is \emph{no}, and in the Appendix, we give a theorem supporting this conjecture --- ruling out the possibility for deterministic admissions rules in every case except when the grading variance is exactly $1$.

\subsection*{Acknowledgements}
We thank Mallesh Pai and Jonathan Ullman for helpful discussions at an early stage of this work.

\bibliographystyle{plainnat}
\bibliography{refs}

\appendix

\section{Omitted proofs}

\subsection{Proof of Claim \ref{clm: sIGM_equivalence}}\label{app: sIGM_equivalence}

\begin{align*}
&\Pr \left[T_i = t  | G_i = g, A_i = 1 \right]
\\&= \frac{\Pr \left[\left(G_i = g | T_i = t \right) |  A_i = 1 \right] \cdot \Pr \left[T_i = t  | A_i = 1 \right]}{\Pr\left[G_i = g | A_i = 1\right]}
\end{align*}
Remembering that $G_i = T_i + Y$ where $Y \sim \mathcal{N} \left(0,\gamma^2 \right)$, we have
\[
\Pr \left[\left( G_i = g | T_i = t \right) |  A_i = 1 \right] = \Pr \left[ Y = g-t \right]= \phi \left(\frac{g-t}{\gamma} \right),
\]
does not depend on the population $i$, hence sIGM holds if and only if
\[
\Pr \left[T_1 = t  |  A_1 = 1 \right] = \frac{\Pr\left[G_1 = g | A_1 = 1\right]}{\Pr\left[G_2 = g | A_2 = 1\right]} \cdot \Pr \left[T_2 = t  |  A_2 = 1 \right]
\]
$\Pr \left[T_1 = t  |  A_1 = 1 \right]$ and $\Pr \left[T_2 = t  |  A_2 = 1 \right]$ are probability density functions over $t$ that both integrate to $1$, and $\frac{\Pr\left[G_1 = g| A_1=1\right]}{\Pr\left[G_2 = g|A_2=1\right]}$ is constant in $t$, therefore the above equation holds if and only if
\[
\Pr \left[T_1 = t  |  A_1 = 1 \right] = \Pr \left[T_2 = t  |  A_2 = 1 \right].
\]
and
\[
Pr\left[G_1 = g | A_1 = 1\right] = Pr\left[G_2 = g | A_2 = 1\right].
\]
This is equivalent to simply
\[
\Pr \left[T_1 = t  |  A_1 = 1 \right] = \Pr \left[T_2 = t  |  A_2 = 1 \right],
\]
noting that when the posterior over types after admission by the school are equal, the distribution of grades must be equal too, as the distribution of $Y$ is population-independent.

\subsection{Proof of Claim \ref{clm: hazard_rate}}\label{app: hazard_rate}

We start by showing bounds on $H(x)$. First, we note that
\begin{align*}
1 - \Phi(x)
&= \int_{t = x}^{+\infty} \frac{1}{\sqrt{2\pi}} \exp\left(-t^2/2\right) dt
\\&\leq \int_{t = x}^{+\infty} \frac{1}{\sqrt{2\pi}} \frac{t}{x} \exp\left(-t^2/2\right) dt
\\& = \frac{1}{x \sqrt{2\pi}} \exp\left(-x^2/2\right)
\\&= \frac{\phi(x)}{x}
\end{align*}
where the inequality follows from $t/x \geq 1$ for $t \geq x$. This immediately implies that $H(x) \geq x$. Now, we show that
\[
g(x) = 1 - \Phi(x) -  \frac{x}{x^2+1} \phi(x) \geq 0
\]
This follows from noting that
\begin{align*}
g'(x)
&= - \phi(x) - \frac{1 - x^2}{(x^2+1)^2} \phi(x) + \frac{x^2}{x^2+1} \phi(x)
\\&= \frac{x^4 + x^2 +x^2 - 1 - x^4 -2 x^2 - 1}{(x^2+1)^2} \phi(x)
\\&= - \frac{2}{(x^2+1)^2} \phi(x)
\end{align*}
so $g$ is a decreasing function with $\lim_{x \to +\infty} g(x) = 0$. Therefore,
\[
H(x) \leq \frac{x^2+1}{x} = x + \frac{1}{x}
\]
Now, $\lim_{x \to -\infty} H(x) = 0$ as $\phi(x) \to 0$ and $1-\Phi(x) \to 1$. $x \leq H(x) \leq x + \frac{1}{x}$ is enough to show that $H(x) = x + o(x)$ as $x \to +\infty$. Finally,
\begin{align*}
H'(x)
& = \frac{-x \phi(x) \left(1-\Phi(x)\right) + \phi(x)^2}{\left(1 - \Phi(x)\right)^2}
\\& =  - x H(x) + H(x)^2
\\&= H(x) \left( H(x) - x\right)
\\& \geq 0
\end{align*}
as $H(x) \geq 0$ and $H(x) \geq x$. This concludes the proof.

\subsection{Proof of Lemma~\ref{lem: closed_form_bayes_posterior}}\label{app: closed_form_bayes_posterior}

For simplicity of notations, we drop the $i$ subscripts in the whole proof. We let  $Z_1 = T$, $Z_2 = (S,G)$ and $z_2 = (s,g) \in \reals^2$, and apply Claim \ref{clm: conditional_MVN} with $m_1 = \E [T] = \mu$, $m_2=[\mu,\mu]^\top$, and
\[
\Sigma =
\left[
\begin{array}{c c c}
\sigma^2 & \sigma^2 & \sigma^2 \\
\sigma^2 & \sigma^2 + 1 & \sigma^2\\
\sigma^2 & \sigma^2 & \sigma^2 + \gamma^2 \\
\end{array}
\right]
\]
and we have $\Sigma_{12} = [\sigma^2 \; \sigma^2]$ and $\Sigma_{22} = \left[\begin{array}{c c}
\sigma^2 + 1 & \sigma^2\\
\sigma^2 & \sigma^2 + \gamma^2
\end{array}
\right]
$.  Claim \ref{clm: conditional_MVN} yields
\begin{align*}
\mathbb{E} \left[ T | S = s, G = g\right]
&= \mu + \frac{1}{\sigma^2 + \gamma^2 + \sigma^2 \gamma^2} \left(\sigma^2 \gamma^2 (s-\mu) + \sigma^2 (g-\mu)\right)
\\&= \frac{\gamma^2}{\sigma^2 + \gamma^2 + \sigma^2 \gamma^2} \mu +\frac{\sigma^2}{\sigma^2 + \gamma^2 + \sigma^2 \gamma^2} g
\\&+ \frac{\gamma^2 \sigma^2}{\sigma^2 + \gamma^2 + \sigma^2 \gamma^2} s.
\end{align*}
Therefore, it must be the case that
\begin{align*}
&\mathbb{E} \left[ T | S \geq \beta , G = g\right]
% &= \mu + \frac{1}{\sigma^2 + \gamma^2 + \sigma^2 \gamma^2} \left(\sigma^2 \gamma^2 (s-\mu) + \sigma^2 (g-\mu)\right)
\\&= \int_{s} \mathbb{E} \left[ T | S = s, G = g\right] \Pr \left[ S = s, G = g | S \geq \beta, G = g\right] ds
\\&= \int_{s \geq \beta} \mathbb{E} \left[ T | S = s, G = g\right] \Pr \left[ S = s | S \geq \beta, G = g\right] ds
\\& = \frac{\gamma^2}{\sigma^2 + \gamma^2 + \sigma^2 \gamma^2} \cdot \mu + \frac{\sigma^2}{\sigma^2 + \gamma^2 + \sigma^2 \gamma^2} \cdot g
\\&+ \frac{\gamma^2 \sigma^2}{\sigma^2 + \gamma^2 + \sigma^2 \gamma^2}  \int_{s \geq \beta} s \Pr \left[ S = s | S \geq \beta, G = g\right] ds
\\&= \frac{\gamma^2}{\sigma^2 + \gamma^2 + \sigma^2 \gamma^2} \cdot \mu + \frac{\sigma^2}{\sigma^2 + \gamma^2 + \sigma^2 \gamma^2} \cdot g
\\&+  \frac{\gamma^2 \sigma^2}{\sigma^2 + \gamma^2 + \sigma^2 \gamma^2} \mathbb{E} \left[ S | S \geq \beta , G = g\right]
\end{align*}
Because $(S,G)$ is a multivariate Gaussian, by Claim \ref{clm: conditional_MVN}, $S|G$ is a normal random variable with mean
%\ar{Maybe this deserves its own claim in the prelims}\jz{this follows directly from Claim \ref{clm: conditional_MVN}, added a reference to the claim}
\[
\mu + \frac{\sigma^2}{\sigma^2+\gamma^2} (g - \mu) = \frac{\gamma^2}{\sigma^2+\gamma^2} \cdot \mu + \frac{\sigma^2}{\sigma^2+\gamma^2} \cdot g
\]
and variance
\[
\sigma^2 + 1 - \frac{\sigma^4}{\sigma^2+ \gamma^2} = \frac{\sigma^2 \gamma^2 + \sigma^2+ \gamma^2}{\sigma^2+ \gamma^2}
\]
It immediately follows that
\begin{align*}
&\mathbb{E} \left[ S | S \geq \beta , G = g\right]
\\&= \frac{\gamma^2}{\sigma^2+\gamma^2} \cdot \mu
+ \frac{\sigma^2}{\sigma^2+\gamma^2} \cdot g
\\&+ \sqrt{\frac{\sigma^2 \gamma^2 + \sigma^2+ \gamma^2}{\sigma^2+ \gamma^2}}
\cdot H \left(
\frac{\beta - \frac{\gamma^2}{\sigma^2+\gamma^2} \cdot \mu - \frac{\sigma^2}{\sigma^2+\gamma^2} \cdot g}{\sqrt{\frac{\sigma^2 \gamma^2 + \sigma^2+ \gamma^2}{\sigma^2+ \gamma^2}}}
\right)
\\& = \frac{\gamma^2}{\sigma^2+\gamma^2} \cdot \mu + \frac{\sigma^2}{\sigma^2+\gamma^2} \cdot g
\\& + \sqrt{\frac{\sigma^2 \gamma^2 + \sigma^2+ \gamma^2}{\sigma^2+ \gamma^2}}
\cdot
H \left( \frac{(\sigma^2 + \gamma^2) \cdot \beta - \gamma^2 \mu - \sigma^2 g}{\sqrt{(\sigma^2+ \gamma^2) (\sigma^2+ \gamma^2 + \gamma^2 \sigma^2)}} \right)
\end{align*}
as it is the mean of Gaussian $S|G=g$ truncated at $\beta$. Therefore, we have that
\begin{align*}
&\E \left[ T | S \geq \beta, G = g \right]
\\&= \frac{\gamma^2}{\sigma^2+ \gamma^2+\gamma^2 \sigma^2} \left( 1 + \frac{\sigma^2 \gamma^2}{\sigma^2 + \gamma^ 2} \right) \mu
\\&+ \frac{\sigma^2}{\sigma^2+ \gamma^2+\gamma^2 \sigma^2} \left( 1 + \frac{\sigma^2 \gamma^2}{\sigma^2 + \gamma^ 2} \right) g
\\&+ \frac{\gamma^2 \sigma^2}{\sqrt{(\sigma^2+\gamma^2) (\sigma^2 + \gamma^2 + \gamma^2 \sigma^2)}}
\cdot H \left( \frac{(\sigma^2 + \gamma^2) \cdot \beta - \gamma^2 \mu - \sigma^2 g}{\sqrt{(\sigma^2+ \gamma^2) (\sigma^2+ \gamma^2 + \gamma^2 \sigma^2)}} \right)
\end{align*}
The term in front of $\mu$ is
\begin{align*}
&\frac{\gamma^2}{\sigma^2 + \gamma^2 + \sigma^2 \gamma^2} \left( 1 + \frac{\sigma^2 \gamma^2}{\sigma^2 + \gamma^ 2}  \right) \\&= \frac{\gamma^2}{\sigma^2 + \gamma^2 + \sigma^2 \gamma^2} \frac{\sigma^2 + \gamma^2 + \sigma^2 \gamma^2}{\sigma^2 + \gamma^ 2}
\\& = \frac{\gamma^2}{\sigma^2 + \gamma^2}
\end{align*}
Similarly, the term in front of $g$ is given by
\begin{align*}
\frac{\sigma^2}{\sigma^2 + \gamma^2 + \sigma^2 \gamma^2} \left( 1 + \frac{\sigma^2 \gamma^2}{\sigma^2 + \gamma^ 2} \right)
= \frac{\sigma^2}{\sigma^2 + \gamma^2}
\end{align*}

\subsection{Proof of Corollary \ref{cor: increasing_limits}}\label{app: increasing_limits}

Continuity and differentiability follow immediately from the closed-form expression for the mean, and the fact that the hazard rate is continuous and differentiable. The limits at $+\infty$ for $g$ and both limits for $\beta_i$ follow directly from the fact that $\lim_{x \to -\infty} H(x) = 0,~\lim_{x \to +\infty} H(x) = +\infty$ by Claim~\ref{clm: hazard_rate}. As $g \to -\infty$,
\[
\frac{(\sigma^2 + \gamma^2) \cdot \beta - \gamma^2 \mu - \sigma^2 g}{\sqrt{(\sigma^2+ \gamma^2) (\sigma^2+ \gamma^2 + \gamma^2 \sigma^2)}} \to +\infty,
\]
and as $x \to +\infty$, $H(x) = x + o \left( 1 \right)$ by Claim~\ref{clm: hazard_rate}. Therefore,
\begin{align*}
&e_i(\mu_i,\sigma_i,\beta_i,g)
\\& = \frac{\gamma^2}{\sigma_i^2 + \gamma^2} \mu_i
+ \frac{\sigma_i^2}{\sigma^2 + \gamma^2} g
\\&+ \frac{\gamma^2 \sigma_i^2}{\sqrt{(\sigma_i^2+\gamma^2) (\sigma_i^2 + \gamma^2 + \gamma^2 \sigma_i^2)}}
\cdot \frac{(\sigma_i^2 + \gamma^2) \cdot \beta_i - \gamma^2 \mu_i - \sigma_i^2 g}{\sqrt{(\sigma_i^2+ \gamma^2) (\sigma_i^2+ \gamma^2 + \gamma^2 \sigma_i^2)}}
\\& + o_{g \to -\infty}(1)
\end{align*}
The term in front of $g$ is given by
\[
\frac{\sigma_i^2}{\sigma^2 + \gamma^2} \left(1 - \frac{\gamma^2 \sigma_i^2}{\sigma_i^2+ \gamma^2 + \gamma^2 \sigma_i^2} \right) > 0
\]
hence the limit at $-\infty$ is $-\infty$. Monotonicity in $\beta_i$ is immediate from the fact that the hazard rate of a normal random variable is strictly increasing by Claim~\ref{clm: hazard_rate}. Finally, the mean of a truncated Gaussian is strictly increasing in the mean parameter of the Gaussian at constant variance. In particular, $S_i|G_i=g$ is Gaussian with mean $\frac{\gamma^2}{\sigma_i^2+\gamma^2} \mu + \frac{\sigma_i^2}{\sigma^2+\gamma^2} g$ and variance constant in both $g$ and $\mu$ by Claim~\ref{clm: conditional_MVN}; therefore, $\E \left[ S_i | G_i=g, S_i \geq \beta_i \right]$ is monotone strictly increasing in $\frac{\gamma^2}{\sigma_i^2+\gamma^2} \mu + \frac{\sigma_i^2}{\sigma_i^2+\gamma^2} g$, and hence in $g$ and in $\mu$. Because by the proof of Lemma~\ref{lem: closed_form_bayes_posterior},
\begin{align*}
\mathbb{E} \left[ T_i | S_i \geq \beta , G_i = g\right]
&= \frac{\gamma^2}{\sigma^2 + \gamma^2 + \sigma^2 \gamma^2} \mu_i + \frac{\sigma^2}{\sigma^2 + \gamma^2 + \sigma^2 \gamma^2} g
\\& +  \frac{\gamma^2 \sigma^2}{\sigma^2 + \gamma^2 + \sigma^2 \gamma^2} \mathbb{E} \left[ S_i | S_i \geq \beta , G_i = g\right],
\end{align*}
it immediately follow that $\E \left[ T_i| G_i=g, S_i \geq \beta\right]$ is strictly increasing in $g$ and in $\mu_i$.

\subsection{Proof of Lemma \ref{lem: conditional_derivatives}}\label{app: conditional_derivatives}

In the whole proof, we write $P_i(t) = \phi \left( \frac{t-\mu_i}{\sigma_i} \right)$ be the probability density function of a normal random variable with mean $\mu_i$ and variance $\sigma_i^2$, evaluated at $t$.

\begin{claim}\label{clm: cont}
$x_i(.)$ is continuous, increasing, and $x_i(t) > 0$ $\forall t \in \reals$.
\end{claim}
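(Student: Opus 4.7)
The plan is to reduce everything to the representation $x_i(t) = \int_{\mathbb{R}} A_i(u+t)\,\phi(u)\,du$ (obtained by the change of variables $u = s - t$ in $x_i(t) = \int_s A_i(s)\phi(s-t)\,ds$) and then verify the three properties one at a time.

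Monotonicity is immediate: for $t_1 \le t_2$, the inequality $A_i(u+t_1) \le A_i(u+t_2)$ holds pointwise in $u$ by the assumed monotonicity of $A_i$, and integrating against the nonnegative density $\phi(u)$ preserves the inequality, giving $x_i(t_1) \le x_i(t_2)$. Positivity is also short: since $A_i$ is nonzero and non-decreasing with values in $[0,1]$, there exists $s_0 \in \mathbb{R}$ with $A_i(s_0) = c > 0$, and then $A_i(s) \ge c$ for every $s \ge s_0$. Hence
\[
x_i(t) \;\ge\; \int_{s \ge s_0} c\,\phi(s-t)\,ds \;=\; c\bigl(1-\Phi(s_0-t)\bigr) \;>\; 0,
\]
for all $t \in \mathbb{R}$, since the standard normal CDF is strictly less than one everywhere.

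For continuity, I would apply dominated convergence. A monotone function on $\mathbb{R}$ has at most countably many points of discontinuity, so the set $D \subset \mathbb{R}$ of discontinuities of $A_i$ has Lebesgue measure zero. Fix $t_0$ and let $t_n \to t_0$; then for every $u$ with $u + t_0 \notin D$, i.e.\ for Lebesgue-almost every $u$, we have $A_i(u+t_n) \to A_i(u+t_0)$. The integrands $A_i(u+t_n)\phi(u)$ are dominated uniformly by the integrable function $\phi(u)$, so dominated convergence yields $x_i(t_n) \to x_i(t_0)$.

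The only genuinely non-mechanical step is the continuity argument, and even there the ``main obstacle'' is really just the correct bookkeeping: recognizing that the countability of the discontinuity set of a monotone function is exactly what turns pointwise monotonicity into the almost-everywhere convergence needed for dominated convergence. The other two properties are essentially one-line consequences of monotonicity and the strict positivity of $\phi$.
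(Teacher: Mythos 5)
Your proof is correct, and for monotonicity and positivity it matches the paper, which simply asserts both ``follow immediately'' from $A_i$ being non-zero and non-decreasing (your explicit lower bound $x_i(t) \geq c\bigl(1-\Phi(s_0-t)\bigr) > 0$ is a welcome elaboration of that assertion). The continuity step is where you genuinely diverge. The paper shifts the increment onto the Gaussian kernel and bounds
\[
\left\vert x_i(t+h)-x_i(t)\right\vert \;\leq\; \int_u \left\vert \phi(u-h)-\phi(u)\right\vert du \;=\; 2\left(\Phi(h/2)-\Phi(-h/2)\right),
\]
using only $\vert A_i\vert \leq 1$; this is the continuity of translation in $L^1$ made explicit, and it yields \emph{uniform} continuity with a concrete modulus, without ever touching the regularity of $A_i$. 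You instead keep the increment on $A_i$ and invoke dominated convergence together with the fact that a monotone function is continuous off a countable (hence null) set. Both arguments are sound; the paper's is more robust in that it would survive dropping monotonicity of $A_i$ entirely (only boundedness and measurability are needed), whereas your route leans on monotonicity precisely to secure almost-everywhere convergence of the integrands. Conversely, your argument is arguably the more standard ``textbook'' deployment of dominated convergence and requires no computation. Neither gap is a defect; this is a legitimate alternative proof of the same claim.
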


\begin{proof}
Note that
\begin{align*}
x_i(t) &=\int_s \Pr \left[A_i = 1 | S_i = s \right] \Pr\left[S_i=s | T_i=t\right] ds
\\&= \int_s A_i(s) \phi(s-t) ds
\\&= \int_u A_i(u+t) \phi(u) du
\end{align*}
Strict positivity and monotonicity follow immediately from the fact that $A_i(.)$ is a non-zero, non-decreasing admission rule. For any $h >0$, as $\left\vert A_i(s) \right\vert \leq 1$, we have that
\begin{align*}
\left\vert x_i(t+h) - x_i(t) \right\vert
&\leq \int_s \left\vert \phi(s-t-h) - \phi(s-t) \right\vert ds
\\&= \int_u \left\vert \phi(u-h) - \phi(u) \right\vert du
\end{align*}
As $\phi(u-h) - \phi(u) \geq 0$ iff $u \geq h/2$, we further have that
\begin{align*}
\left\vert x_i(t+h) - x_i(t) \right\vert
& \leq \int_{u \geq h/2} \left(\phi(u-h) - \phi(u) \right) du
\\& + \int_{u \leq h/2} \left(\phi(u) - \phi(u-h) \right) du
\\&= 2 \left( \Phi(h/2) - \Phi(-h/2) \right)
\end{align*}
Therefore, it follows that
\[
\lim_{h \to 0^+} \left\vert x_i(t+h) - x_i(t) \right\vert \to 0
\]
and by a similar argument that
\[
\lim_{h \to 0^-} \left\vert x_i(t+h) - x_i(t) \right\vert \to 0
\]
This concludes the proof.
\end{proof}

We can now proceed to the proof of Lemma~\ref{lem: conditional_derivatives}:

\begin{proof}
For simplicity of notations, let
\[
h_i(g,t) = \phi\left(\frac{g-t}{\gamma}\right) x_i(t) P_i(t)
\]
and remark that we can write
\[
\E_i \left[T_i^k | G_i=g, A_i = 1 \right] = \frac{\int_t t^k h_i(g,t) dt}{\int_t h_i(g,t) dt}.
\]
$x_i(t)> 0$ $\forall t \in \reals$ from Claim~\ref{clm: cont}, and $P_i(t) > 0$ for all $t \in \reals$, therefore $\int_t h_i(g,t) dt > 0~\forall g \in \reals$ and the above expectation is well-defined. For all $k$, $\int_t P_i(t) dt, \int_t |t|^k P_i(t) dt < +\infty$ because the moment generating function of a Gaussian exists and is finite, and $\left\vert x_i(t) \phi\left(\frac{g-t}{\gamma}\right)  \right\vert \leq \frac{1}{\sqrt{2\pi} \gamma}$, therefore $t h_i(g,t)$ and $h_i(g,t)$ are integrable for all $g$. We have that for all $t$, $h_i$ is differentiable everywhere in $g$, and that
\[
\frac{\partial h_i}{\partial g}(g,t) = \frac{t-g}{\gamma^2} \cdot \phi\left(\frac{g-t}{\gamma}\right) x_i(t) P_i(t).
\]
Since $x \rightarrow |x| \exp\left(-x^2/2\gamma^2\right) \leq \gamma e^{-1/2}$ for all $x$, we have that
\[
|t-g| \exp \left( -(t-g)^2/2\gamma^2 \right) \leq \gamma e^{-1/2}
\]
hence $\left\vert \frac{\partial h}{\partial g}(g,t) \right\vert \leq \frac{\gamma e^{-1/2}}{\sqrt{2\pi}\gamma} P_i(t)$ and $\left\vert t^k \cdot  \frac{\partial h}{\partial g}(g,t) \right\vert \leq \frac{\gamma e^{-1/2}}{\sqrt{2\pi}\gamma} t^k P_i(t)$. Since $P_i(t)$ and $|t|^k P_i(t)$ are integrable (because the moment generating function of a Gaussian exists and is finite), we can differentiate under the integral sign and show that
\[
\frac{\partial}{\partial g}\left[ \int_t h_i(g,t) dt \right] = \int_t \frac{\partial h_i}{\partial g}(g,t) dt
\]
and
\[
\frac{\partial}{\partial g}\left[ \int_t t^k \cdot h_i(g,t) dt \right] = \int_t t^k \frac{\partial h_i}{\partial g}(g,t) dt
\]

Therefore, $\E_i \left[T_i^k | G_i=g, A_i = 1 \right]$ is differentiable (hence continuous) in $g$ and has derivative
\begin{align*}
&\frac{\partial}{\partial g} \E_i \left[T_i^k | G_i=g, A_i = 1 \right]
\\&= \frac{1}{\gamma^2} \frac{\int_t t^k(t-g) h_i(g,t) dt \cdot \int_t h_i(g,t) dt}{\left(\int_t h_i(g,t) dt\right)^2}
\\&- \frac{1}{\gamma^2} \frac{\int_t t^k h_i(g,t) dt \cdot \int_t (t-g) h_i(g,t) dt}{\left(\int_t h_i(g,t) dt\right)^2}
\\&= \frac{1}{\gamma^2} \left(\frac{\int_t t^{k+1} h_i(g,t) dt}{\int_t h_i(g,t) dt} - \frac{\int_t t^{k} h_i(g,t) dt}{\int_t h_i(g,t) dt} \cdot \frac{\int_t t h_i(g,t) dt}{\int_t h_i(g,t) dt}\right)
\\& = \frac{1}{\gamma^2} \E_i \left[T_i^{k+1} \middle\vert A_i=1, G_i= g \right]
\\&- \frac{1}{\gamma^2}  \E_i \left[T_i^k \middle\vert A_i=1, G_i= g \right] \cdot \E_i \left[T_i |A_i=1, G_i= g \right]
\end{align*}
\end{proof}

\subsection{Proof of Claim \ref{clm: fairness_nonoise}}\label{app: fairness_nonoise}
Conditional on acceptance, $S_i \geq C^+$ and hence so $T_i \geq C^+$ with probability 1. Hence, for any grading policy and any grade $g$ it must be that
\[
\E \left[T_i | G_i = g, A_i = 1 \right] \geq C^+
\]
Therefore, the employer hires all students accepted by the school, and IGM holds. Because the probability of being hired by the employer is then just given by $A_i(t)$ for a student with type $t$, and because $A_1(t) = A_2(t)$, equal opportunity also holds.

\section{Hardness of equal opportunity in the single threshold case}\label{app: equalodd_partial}

\begin{lemma}\label{lem: equalodd_partial_1}
For any school hiring cost $C$, and any two Gaussians priors with different means $\mu_1 > \mu_2$ and same variances $\sigma_1 = \sigma_2$, there exists no thresholding admission rules such that IGM and equal opportunity hold at the same time.
\end{lemma}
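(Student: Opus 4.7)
The plan is to combine the earlier structural results to force a chain of equalities that collapses under strict monotonicity of the posterior expectation in $\mu_i$. I will argue in three steps: first derive, from IGM, that there is a common ``hiring grade'' $g^*$ across both groups; next use equal opportunity together with the common $g^*$ to force the admission thresholds themselves to be equal; finally use strict monotonicity in the prior mean to obtain a contradiction with IGM.

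First I would unpack what IGM at a single cost $C$ means in the thresholding regime. By Corollary~\ref{cor: increasing_limits}, for each group the map $g \mapsto \E[T_i \mid G_i = g, S_i \geq \beta_i]$ is continuous, strictly increasing, and ranges over all of $\reals$, so it has a unique root $g_i^*(C)$ of $\E[T_i \mid G_i = g, S_i \geq \beta_i] = C$. IGM says the indicator $\E[T_i \mid G_i = g, S_i \geq \beta_i] \geq C$ is group-independent in $g$, which immediately forces $g_1^*(C) = g_2^*(C) =: g^*$.

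Next I would plug this common hiring cutoff into the equal opportunity equation. Since $S_i$ and $G_i$ are conditionally independent given $T_i$, the integrand on either side factors as $\Pr[G_i = g \mid T_i=t] \cdot x_i(t)$, and the indicator $\mathbbm{1}\{\E[T_i \mid G_i = g, A_i=1] \geq C\}$ becomes $\mathbbm{1}\{g \geq g^*\}$ on both sides. Integrating out $g$ gives
\[
x_1(t)\left(1 - \Phi\!\left(\tfrac{g^* - t}{\gamma}\right)\right) = x_2(t)\left(1 - \Phi\!\left(\tfrac{g^* - t}{\gamma}\right)\right) \quad \text{for all } t \in \reals.
\]
Since the tail factor is strictly positive, this forces $x_1(t) = x_2(t)$ for all $t$. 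For thresholding rules, $x_i(t) = 1 - \Phi(\beta_i - t)$, so by strict monotonicity of $\Phi$ we conclude $\beta_1 = \beta_2 =: \beta$.

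Finally I would derive the contradiction. With $\sigma_1 = \sigma_2$, $\beta_1 = \beta_2$, and a common grade $g^*$, the two posterior expectations $\E[T_i \mid G_i = g^*, S_i \geq \beta]$ are evaluations of the same function $e(\mu, \sigma, \beta, g^*)$ at two different values of $\mu$, namely $\mu_1 \neq \mu_2$. By Corollary~\ref{cor: increasing_limits} this function is strictly increasing in its first argument, so $\mu_1 > \mu_2$ implies
\[
\E[T_1 \mid G_1 = g^*, S_1 \geq \beta] > \E[T_2 \mid G_2 = g^*, S_2 \geq \beta],
\]
contradicting IGM's requirement that both equal $C$. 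The main substantive step is the second one (squeezing equal opportunity down to $x_1 \equiv x_2$); the first and third are essentially direct invocations of the monotonicity and continuity results already established.
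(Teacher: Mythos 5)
Your proof is correct and follows essentially the same route as the paper's: use IGM to equalize the hiring grade cutoffs $g_1^*(C)=g_2^*(C)$, use equal opportunity to force $x_1\equiv x_2$ and hence $\beta_1=\beta_2$, and then contradict IGM via the strict monotonicity of the posterior expectation in $\mu_i$ from Corollary~\ref{cor: increasing_limits}. No substantive differences.
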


\begin{proof}
Fix any hiring cost $C$. For simplicity, in the rest of the proof, we let $g_i^* = g_i^*(C)$. A student with grade $g$ in population $i$ is accepted if and only if $g \geq g_i^*(C)$. Therefore, a student with type $t$ in population $i$ gets accepted w.p.
\begin{align*}
x_i(t) \Pr \left[G_i \geq g_i^* | T_i = t \right]
&= x_i(t) \int_{g \geq g_i^*} \phi \left(\frac{g-t}{\gamma} \right) dg
\\& = x_i(t) \left(1 - \Phi \left( \frac{g_i^*-t}{\gamma}\right) \right)
\end{align*}
and we require
\[
\frac{x_1(t)}{x_2(t)} =  \frac{1 - \Phi \left( \frac{g_2^*-t}{\gamma}\right)}{1 - \Phi \left( \frac{g_1^*-t}{\gamma}\right)}
\]
In the case of a thresholding admission rule with threshold $\beta_i$ in population $i$, we have that
\[
x_i(t) = \int_{s \geq \beta_i} \phi(s-t) ds = 1 - \Phi \left( \beta_i - t \right)
\]
and the above equation becomes
\begin{align}\label{eq: cond_equalodds}
\frac{1 - \Phi \left( \beta_1 - t \right)}{1 - \Phi \left( \beta_2 - t \right)} =  \frac{1 - \Phi \left( \frac{g_2^*-t}{\gamma}\right)}{1 - \Phi \left( \frac{g_1^*-t}{\gamma}\right)}
\end{align}
Note that if IGM holds, i.e. if $g_1^* = g_2^* = g^*$, then it must be the case that $\beta_1 = \beta_2 = \beta$. Remember that at a fixed variance $\sigma$, we have that $\E [T_i | G_i =g, S_i \geq \beta ]$ is a strictly increasing function of $\mu_i$ at fixed $\beta, \sigma, g^*$ by Corollary~\ref{cor: increasing_limits}, therefore implying that
\[
\E [T_1 |  G_1=g, S_1 \geq \beta_1 ] > \E [T_2 | G_2 = g, S_2 \geq \beta_2]
\]
for all $g$. It must then be the case that $g_1^* > g_2^*$, which is a contradiction.
\end{proof}

\begin{lemma}\label{lem: equalodd_partial_2}
Fix any school hiring cost $C$. For $\gamma \neq 1$, equal opportunity is impossible with any thresholding admission rule for any two Gaussians with different means $\mu_1 > \mu_2$ and same variances $\sigma_1 = \sigma_2$. When $\gamma = 1$, equal opportunity holds if and only if the thresholds $\beta_1$ and $\beta_2$ can be chosen such that simultaneously,
\[
\beta_1 = g^*_2(C) \text{ and } \beta_2 = g^*_1(C),
\]
or equivalently,
\[
\E \left[ T_ 1 | G_1 = \beta_2, S_1 \geq \beta_1 \right] = \E \left[ T_ 2 | G_2 = \beta_1, S_2 \geq \beta_2 \right] = C.
\]
\end{lemma}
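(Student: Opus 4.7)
The plan is to start from the functional equation \eqref{eq: cond_equalodds} derived in the proof of Lemma~\ref{lem: equalodd_partial_1},
\[
\frac{1 - \Phi(\beta_1 - t)}{1 - \Phi(\beta_2 - t)} = \frac{1 - \Phi\bigl((g_2^* - t)/\gamma\bigr)}{1 - \Phi\bigl((g_1^* - t)/\gamma\bigr)} \quad \text{for all } t \in \reals,
\]
with $g_i^* = g_i^*(C)$. Since this is an identity in $t$, the strategy is to extract necessary conditions on $(\beta_1, \beta_2, g_1^*, g_2^*, \gamma)$ by matching asymptotic expansions as $t \to -\infty$, when all of $\beta_i - t$ and $(g_i^* - t)/\gamma$ tend to $+\infty$, so that Claim~\ref{clm: hazard_rate} and the classical Mills-ratio expansion apply.

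Taking logarithms and using the refinement $\log(1 - \Phi(x)) = -x^2/2 - \log x - \tfrac{1}{2}\log(2\pi) + O(1/x^2)$ as $x \to +\infty$ (easily derivable by sharpening Claim~\ref{clm: hazard_rate}), together with $\log(\beta_i - t) = \log(-t) - \beta_i/t + O(1/t^2)$, I expand both sides of the logged identity and match the coefficients of $t$, of the constant term, and of $1/t$. This yields the three necessary conditions
\begin{align*}
(\text{I})\ & \beta_1 - \beta_2 = (g_2^* - g_1^*)/\gamma^2, \\
(\text{II})\ & (\beta_1 - \beta_2)(\beta_1 + \beta_2) = (g_2^* - g_1^*)(g_1^* + g_2^*)/\gamma^2, \\
(\text{III})\ & \beta_1 - \beta_2 = g_2^* - g_1^*.
\end{align*}

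Comparing (I) and (III) forces either $\gamma^2 = 1$ or $\beta_1 = \beta_2$. The case $\beta_1 = \beta_2$ is incompatible with equal opportunity under the lemma's hypotheses: then $x_1(\cdot) = x_2(\cdot)$ and the equal opportunity identity collapses to $g_1^*(C) = g_2^*(C)$, but Corollary~\ref{cor: increasing_limits} tells us $\E[T_i | G_i = g, S_i \geq \beta]$ is strictly increasing in $\mu_i$ at fixed $\beta, \sigma, g$, so $\mu_1 > \mu_2$ with $\sigma_1 = \sigma_2$ and $\beta_1 = \beta_2$ forces $g_1^*(C) < g_2^*(C)$, a contradiction. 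This proves impossibility whenever $\gamma \neq 1$. For $\gamma = 1$ we then necessarily have $\beta_1 \neq \beta_2$, and dividing (II) by (I) yields $\beta_1 + \beta_2 = g_1^* + g_2^*$; combined with $\beta_1 - \beta_2 = g_2^* - g_1^*$ this uniquely determines $\beta_1 = g_2^*$ and $\beta_2 = g_1^*$. Sufficiency is then immediate: with $\gamma = 1$, $\beta_1 = g_2^*$, and $\beta_2 = g_1^*$, both sides of \eqref{eq: cond_equalodds} reduce to $(1 - \Phi(g_2^* - t))/(1 - \Phi(g_1^* - t))$, which is an identity in $t$, so equal opportunity holds. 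The equivalent formulation in terms of conditional expectations follows directly from Definition~\ref{def: g*}: $\beta_2 = g_1^*(C)$ is exactly $\E[T_1 | G_1 = \beta_2, S_1 \geq \beta_1] = C$, and symmetrically for population~2.

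The main obstacle is obtaining the expansion of $\log(1 - \Phi(x))$ accurately to two orders beyond the leading quadratic, so that matching the $1/t$ coefficient produces condition~(III) alongside the leading-order (I); this additional order is precisely what lets us separate the role of $\gamma$ from the raw differences $\beta_1 - \beta_2$ and $g_2^* - g_1^*$ and conclude $\gamma = 1$. Everything else reduces to elementary algebra and direct appeals to the monotonicity results already in place.
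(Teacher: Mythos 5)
Your proof is correct, but it takes a genuinely different route from the paper's. Both arguments start from the same functional identity in $t$ (Equation~\eqref{eq: cond_equalodds}) and both are ultimately asymptotic-matching arguments, but the paper differentiates the identity with respect to $t$, rewrites the equality of derivatives in terms of hazard rates to obtain $\bigl(H(\beta_1-t)-H(\beta_2-t)\bigr)/\bigl(H((g_2^*-t)/\gamma)-H((g_1^*-t)/\gamma)\bigr)=1/\gamma^2$, and then sends $t\to+\infty$ so that each hazard-rate difference is governed by the dominant Gaussian density $\phi$; the requirement that a ratio of Gaussian tails have a finite nonzero limit forces $\gamma=1$ and $\beta_1=g_2^*$, after which $\beta_2=g_1^*$ is recovered by substituting back into the original identity. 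You instead take logarithms, send $t\to-\infty$, and match three orders ($t$, constant, $1/t$) of the expansion at once, which yields conditions (I)--(III) simultaneously and lets you read off $\gamma=1$, $\beta_1+\beta_2=g_1^*+g_2^*$, and $\beta_1-\beta_2=g_2^*-g_1^*$ by elementary algebra, with no second substitution step. The price you pay is needing the refined expansion $\log(1-\Phi(x))=-x^2/2-\log x-\tfrac12\log(2\pi)+O(1/x^2)$; this is indeed available, since the proof of Claim~\ref{clm: hazard_rate} already establishes the two-sided bound $\tfrac{x}{x^2+1}\phi(x)\le 1-\Phi(x)\le\tfrac{\phi(x)}{x}$, which gives exactly the $O(1/x^2)$ control you invoke. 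Your handling of the degenerate branch $\beta_1=\beta_2$ (reducing to IGM and contradicting the strict monotonicity of the posterior mean in $\mu_i$ from Corollary~\ref{cor: increasing_limits}), the sufficiency direction for $\gamma=1$, and the translation of $\beta_1=g_2^*(C)$, $\beta_2=g_1^*(C)$ into the stated conditional-expectation form via Definition~\ref{def: g*} are all sound.
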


\begin{proof}
Since IGM cannot hold when equal opportunity holds, we can assume that $g_1^* \neq g_2^*$ w.l.o.g. Pick the population numbers such that $g_1^* > g_2^*$. It is then the case that
\[
\frac{1 - \Phi \left( \frac{g_2^*-t}{\gamma}\right)}{1 - \Phi \left( \frac{g_1^*-t}{\gamma}\right)} < 1
\]
which in turns implies that
\[
\frac{1 - \Phi \left( \beta_1 - t \right)}{1 - \Phi \left( \beta_2 - t \right)} < 1
\]
so we need $\beta_1 < \beta_2$. Since for all t,
\[
\frac{1 - \Phi \left( \beta_1 - t \right)}{1 - \Phi \left( \beta_2 - t \right)} =  \frac{1 - \Phi \left( \frac{g_2^*-t}{\gamma}\right)}{1 - \Phi \left( \frac{g_1^*-t}{\gamma}\right)},
\]
the derivatives of the left-hand side and of the right-hand side need be equal for all $t$. Letting $H(t) = \frac{\phi(t)}{1-\Phi(t)}$ the hazard rate of a standard normal random variable, the equality of derivatives can be written
\begin{align*}
&\frac{1 - \Phi \left( \beta_1 - t \right)}{1 - \Phi \left( \beta_2 - t \right)} \cdot \left( H(\beta_1-t) - H(\beta_2-t) \right)
\\&= \frac{1}{\gamma^2} \frac{1 - \Phi \left( \frac{g_2^*-t}{\gamma}\right)}{1 - \Phi \left( \frac{g_1^*-t}{\gamma}\right)} \cdot \left( H\left(\frac{g_1^*-t}{\gamma}\right) - H\left(\frac{g_2^*-t}{\gamma}\right) \right)
\end{align*}
which further simplifies thanks to Equation~\eqref{eq: cond_equalodds} into
\[
 \frac{H(\beta_1-t) - H(\beta_2-t)}{H\left(\frac{g_2^*-t}{\gamma}\right) - H\left(\frac{g_1^*-t}{\gamma}\right)}
= \frac{1}{\gamma^2}
\]
As $t \to +\infty$, $1-\Phi(\beta_i-t), 1- \Phi\left(\frac{g_1^*-t}{\gamma}\right) \to 1$, and as $\beta_1 < \beta_2$, $g_1^* > g_2^*$, $\phi(\beta_1-t) - \phi(\beta_2-t) \sim_{t \to +\infty} \phi(\beta_1-t)$, $\phi\left(\frac{g_2^*-t}{\gamma}\right) - \phi\left(\frac{g_1^*-t}{\gamma}\right) \sim_{t \to +\infty} \phi\left(\frac{g_2^*-t}{\gamma}\right)$
\[
 \frac{H(\beta_1-t) - H(\beta_2-t)}{H\left(\frac{g_2^*-t}{\gamma}\right) - H\left(\frac{g_1^*-t}{\gamma}\right)} \sim_{t \to +\infty} \frac{ \phi(\beta_1-t)}{\phi\left(\frac{g_2^*-t}{\gamma}\right)}
\]
This limit is either $0$ or $+\infty$ unless $\gamma = 1,~\beta_1 = g_2^*$, in which case it is $1$. It must therefore be the case that $\gamma = 1,~\beta_1 = g_2^*$; note that this directly implies that $\beta_2 = g_1^*$ must hold too, otherwise
\[
\frac{1 - \Phi \left( \beta_1 - t \right)}{1 - \Phi \left( \frac{g_2^*-t}{\gamma}\right)} = 1 \neq \frac{1 - \Phi \left( \beta_2 - t \right)}{1 - \Phi \left( \frac{g_1^*-t}{\gamma}\right)}
\]
which contradicts Equation~\eqref{eq: cond_equalodds}.
\end{proof}

\section{Extension to non-deterministic admission rules}

The proofs only need make use of the fact that $g^*(.)$ is a strictly increasing, continuous and differentiable function on domain $\reals$, and are otherwise identical to the proofs for thresholding admission rules. We show that said properties of $g^*(.)$ hold for non-deterministic allocation rules below, in Lemma \ref{lem: increasing_cond_mean} Lemma~\ref{lem: expected_limit}.

\begin{claim}\label{clm: gaussian_product}
Let $\mu_1, \mu_2 \in \reals$, $\sigma_1, \sigma_2 > 0$, and $\phi$ the density of a standard Normal random variable. Then
\[
\frac{\phi \left(\frac{\mu_1 - t}{\sigma_1} \right) \phi \left(\frac{\mu_2 - t}{\sigma_2} \right)}{\int_t \left(\frac{\mu_1 - t}{\sigma_1} \right) \phi \left(\frac{\mu_2 - t}{\sigma_2} \right) dt}
\]
is the probability density function of a Normal random variable with mean
\[
\mu = \frac{\sigma_2^2 \mu_1 + \sigma_1^2 \mu_2}{\sigma_1^2 + \sigma_2^2}
\]
and variance
\[
\sigma^2 = \frac{1}{\frac{1}{\sigma_1^2} + \frac{1}{\sigma_2^2}} = \frac{\sigma_1^2 \sigma_2^2}{\sigma_1^2 + \sigma_2^2}
\]
\end{claim}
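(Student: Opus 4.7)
The proof is essentially a completing-the-square exercise. The plan is to work with the exponents of the two Gaussian densities, combine them into a single quadratic in $t$, and then complete the square to recognize the resulting expression as proportional to a Gaussian density with the claimed mean and variance. Since the denominator in the claim is precisely the normalizing integral of the numerator (viewed as a function of $t$), any $t$-independent multiplicative constants will cancel, so we only need to track the quadratic-in-$t$ part of the exponent.

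Concretely, first I would write
\[
\phi\!\left(\frac{\mu_1 - t}{\sigma_1}\right) \phi\!\left(\frac{\mu_2 - t}{\sigma_2}\right) = \frac{1}{2\pi} \exp\!\left(-\frac{1}{2}\left[\frac{(t-\mu_1)^2}{\sigma_1^2} + \frac{(t-\mu_2)^2}{\sigma_2^2}\right]\right).
\]
Expanding both squares and collecting powers of $t$, the bracketed expression equals
\[
\left(\frac{1}{\sigma_1^2}+\frac{1}{\sigma_2^2}\right) t^2 - 2\left(\frac{\mu_1}{\sigma_1^2}+\frac{\mu_2}{\sigma_2^2}\right) t + \left(\frac{\mu_1^2}{\sigma_1^2}+\frac{\mu_2^2}{\sigma_2^2}\right).
\]
Setting $\frac{1}{\sigma^2} = \frac{1}{\sigma_1^2}+\frac{1}{\sigma_2^2}$ and $\mu = \sigma^2 \big(\frac{\mu_1}{\sigma_1^2}+\frac{\mu_2}{\sigma_2^2}\big)$, which match the claimed parameters after clearing denominators, I would complete the square to rewrite the bracket as $\frac{(t-\mu)^2}{\sigma^2} + K$ for some constant $K$ depending only on $\mu_1,\mu_2,\sigma_1,\sigma_2$.

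Substituting back, the product becomes $e^{-K/2}/(2\pi) \cdot \exp\!\big(-\frac{(t-\mu)^2}{2\sigma^2}\big)$, which is a $t$-independent positive constant times $\sqrt{2\pi \sigma^2} \cdot f(t)$, where $f$ is the density of $\mathcal{N}(\mu,\sigma^2)$. Dividing by $\int_t \phi(\frac{\mu_1-t}{\sigma_1})\phi(\frac{\mu_2-t}{\sigma_2})\, dt$ normalizes this to a probability density that integrates to $1$, and since $f$ already integrates to $1$ the constant must cancel. Therefore the normalized product equals $f(t)$, the density of a Normal random variable with mean $\mu$ and variance $\sigma^2$, as claimed.

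There is no real obstacle here — the argument is a routine ``conjugate Gaussian'' computation; the only care needed is to correctly track the algebra when rewriting $\sigma^2 = \frac{1}{1/\sigma_1^2 + 1/\sigma_2^2} = \frac{\sigma_1^2\sigma_2^2}{\sigma_1^2+\sigma_2^2}$ and $\mu = \frac{\sigma_2^2\mu_1+\sigma_1^2\mu_2}{\sigma_1^2+\sigma_2^2}$ so that they match the form stated in the claim.
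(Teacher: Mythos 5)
Your proposal is correct and takes essentially the same route as the paper's proof: expand the two Gaussian exponents, collect the quadratic in $t$, complete the square to identify $\frac{1}{\sigma^2} = \frac{1}{\sigma_1^2}+\frac{1}{\sigma_2^2}$ and $\mu = \sigma^2\left(\frac{\mu_1}{\sigma_1^2}+\frac{\mu_2}{\sigma_2^2}\right)$, and let the normalizing integral absorb all $t$-independent constants. No gaps.
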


\begin{proof}
\begin{align*}
&\phi \left(\frac{\mu_1 - t}{\sigma_1} \right) \phi \left(\frac{\mu_2 - t}{\sigma_2} \right)
\\&= D \exp \left(-\frac{(\mu_1 - t)^2}{2\sigma_1^2} +  \frac{(\mu_2 - t)^2}{2\sigma_2^2}  \right)
\\& = D \exp \left(- \left( \frac{1}{2 \sigma_1^2} + \frac{1}{2 \sigma_2^2}  \right) t^2
+ \left( \frac{\mu_1}{\sigma_1^2} + \frac{\mu_2}{\sigma_2^2}  \right) t
-\left( \frac{\mu_1^2}{2 \sigma_1^2} + \frac{\mu_2}{2 \sigma_2^2} \right)
\right)
\\&= K \exp \left( - \frac{1}{2} \left( \frac{1}{\sigma_1^2} + \frac{1}{\sigma_2^2}\right) \left(t- \frac{\frac{\mu_1}{\sigma_1^2} + \frac{\mu_2}{\sigma_2^2}}{\frac{1}{\sigma_1^2} + \frac{1}{\sigma_2^2}} \right)^2\right)
\end{align*}
for some non-zero constants $D$ and $K$. This concludes the proof.
\end{proof}

\begin{lemma}\label{lem: increasing_cond_mean}
Let $e_i(\mu_i,g) = \E_i \left[T_i | G_i=g, A_i = 1 \right]$ in population $i$. Then i) $e_i(.)$ is strictly increasing in both $g$ and $\mu_i$, ii) $L^- =\lim_{g \to -\infty} e_i(\mu_i, g)$ and $L^+ = \lim_{g \to +\infty} e_i(\mu_i, g)$ exist (and are possibly infinite) and iii) $g \to e_i(\mu_i,g)$ is invertible and its inverse $g^*_i(.)$ is strictly increasing and differentiable on $(L^-,L^+)$.
\end{lemma}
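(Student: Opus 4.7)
My plan is to prove each of the three statements in turn, with Lemma~\ref{lem: conditional_derivatives} serving as the key computational input for monotonicity and differentiability in $g$, and a monotone likelihood ratio argument handling monotonicity in $\mu_i$.

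For strict monotonicity in $g$, I would apply Lemma~\ref{lem: conditional_derivatives} with $k=1$, which gives
\[
\frac{\partial e_i}{\partial g}(\mu_i, g) = \frac{1}{\gamma^2}\bigl(\E[T_i^2 \mid G_i = g, A_i = 1] - \E[T_i \mid G_i = g, A_i = 1]^2\bigr) = \frac{1}{\gamma^2}\Var[T_i \mid G_i = g, A_i = 1].
\]
The nontrivial step is to argue that this variance is \emph{strictly} positive, i.e.\ that the posterior on types is not a point mass. By Claim~\ref{clm: cont} we have $x_i(t) > 0$ for every $t$, and $P_i(t), \phi((g-t)/\gamma) > 0$ for every $t$ as well, so the unnormalized posterior density $h_i(g,t) = \phi((g-t)/\gamma)\,x_i(t)\,P_i(t)$ is strictly positive on all of $\reals$. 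The posterior therefore has full support and strictly positive variance, so $\partial e_i/\partial g > 0$.

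For strict monotonicity in $\mu_i$, I would use an MLRP argument. For fixed $g$, the posterior density on $t$ is proportional to $x_i(t)\,\phi((g-t)/\gamma)\,\phi((t-\mu_i)/\sigma_i)$; for $\mu < \mu'$, the ratio of the $\mu'$-density to the $\mu$-density is, up to a constant in $t$, equal to $\exp((\mu'-\mu)\,t/\sigma_i^2)$, which is strictly increasing in $t$. This is the monotone likelihood ratio property in the parameter $\mu_i$, which yields strict first-order stochastic dominance and hence strict monotonicity of $e_i(\mu_i, g)$ in $\mu_i$. The existence of the limits $L^-$ and $L^+$ in $[-\infty,+\infty]$ is then immediate from the fact that $g \mapsto e_i(\mu_i,g)$ is monotone on $\reals$.

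For invertibility, Lemma~\ref{lem: conditional_derivatives} gives differentiability (hence continuity) of $g \mapsto e_i(\mu_i,g)$, so combined with strict monotonicity, this map is a continuous strictly increasing bijection from $\reals$ onto $(L^-,L^+)$. Its inverse $g_i^*$ is therefore well-defined on $(L^-,L^+)$ and strictly increasing, and differentiable there by the inverse function theorem, since we have already shown $\partial e_i/\partial g > 0$ everywhere. The main obstacle in this proof is showing that the conditional variance is strictly (not merely weakly) positive; this is resolved entirely by Claim~\ref{clm: cont} which ensures $x_i$ is strictly positive, so the posterior on types inherits the full support of the Gaussian prior.
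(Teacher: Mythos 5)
Your proposal is correct and follows the paper's proof in its main structure: both arguments obtain $\partial e_i/\partial g$ from Lemma~\ref{lem: conditional_derivatives} with $k=1$, identify it (up to the $1/\gamma^2$ factor) as the conditional variance of $T_i$ given $G_i=g, A_i=1$, argue that this variance is strictly positive because the posterior is non-degenerate, and then deduce invertibility and differentiability of $g_i^*$ from strict monotonicity plus the everywhere-positive derivative. Your justification of non-degeneracy (the unnormalized posterior density $x_i(t)P_i(t)\phi((g-t)/\gamma)$ is strictly positive on all of $\reals$, so the posterior has full support) is in fact slightly cleaner than the paper's appeal to continuity of the density. The one place you genuinely diverge is monotonicity in $\mu_i$: the paper disposes of it by asserting that ``the exact same reasoning ... can be applied to $\mu_i$ by symmetry,'' i.e.\ by differentiating the posterior mean in $\mu_i$ and again obtaining a variance-type expression, whereas you give a monotone-likelihood-ratio argument --- the ratio of posterior densities at $\mu'>\mu$ is proportional to $\exp((\mu'-\mu)t/\sigma_i^2)$, which is strictly increasing in $t$, yielding strict first-order stochastic dominance and hence a strictly larger posterior mean. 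Both are valid; your MLRP route avoids re-justifying differentiation under the integral sign in the parameter $\mu_i$ and is arguably more self-contained, at the cost of invoking the (standard) fact that strict MLRP implies strict dominance of means.
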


\begin{proof}
For all $g \in \reals$, we have by Lemma~\ref{lem: conditional_derivatives} that
\begin{align*}
\frac{\partial e_i}{\partial g}(g)
&= \E_i \left[T_i^2 | G_i=g, A_i = 1 \right] - \E_i \left[T_i | G_i=g, A_i = 1 \right]^2
\end{align*}
This is exactly the variance of a random variable $T_i(g)$ with probability density function $t \to \Pr \left[T_i = t | G_i=g, A_i = 1 \right]$. Because for all $g$, $x_i(t) P_i(t) \phi\left(\frac{g-t}{\gamma}\right)$ is continuous by Claim \ref{clm: cont}, $\Pr \left[T_i = t | G=g, A_i = 1 \right]$ cannot put all of its probability mass on a single value of $t$, and it must therefore be the case that $T_i(g)$ has positive variance and $\frac{\partial e_i}{\partial g}(g)  > 0$. Therefore $e_i(.)$ must be strictly increasing in $g$; the exact same reasoning as in the proof of Lemma~\ref{lem: conditional_derivatives} can be applied to $\mu_i$ by symmetry, and this shows monotonicity. This also implies that the limits at $\pm\infty$ must exist (but may not be finite).

Finally, $e_i$ is a (differentiable hence) continuous and strictly increasing function so it is invertible; as $e_i$ is differentiable, it directly implies that its inverse is also strictly increasing and differentiable on its domain.
\end{proof}

\begin{lemma}\label{lem: expected_limit}
$L^- = -\infty$ and $L^+ = +\infty$, and $g^*_i(.)$ is strictly increasing on $\reals$.
\end{lemma}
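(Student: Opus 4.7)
The lemma combines three statements: $L^+ = +\infty$, $L^- = -\infty$, and strict monotonicity of $g_i^*$ on all of $\reals$. The third is immediate from the first two together with Lemma \ref{lem: increasing_cond_mean}, which already establishes strict monotonicity of $g_i^*$ on $(L^-, L^+)$. So the plan reduces to computing the two endpoints of $e_i(\mu_i, g)$.

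For $L^+$: I would apply Claim \ref{clm: gaussian_product} to write the posterior density of $T$ given $G=g$ and $A=1$ as proportional to $x_i(t)$ times the Gaussian density with mean $m(g) = (\gamma^2\mu_i + \sigma_i^2 g)/(\sigma_i^2+\gamma^2)$ and variance $\tau^2 = \sigma_i^2\gamma^2/(\sigma_i^2+\gamma^2)$. Since $x_i$ is non-decreasing by Claim \ref{clm: cont}, the tilted posterior has a monotone likelihood ratio with respect to $\mathcal{N}(m(g), \tau^2)$ and hence stochastically dominates it, so $e_i(\mu_i, g) \geq m(g)$. As $g \to +\infty$ we have $m(g) \to +\infty$, giving $L^+ = +\infty$.

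For $L^-$ the symmetric stochastic dominance bound runs in the wrong direction, so I would instead exploit the joint Gaussianity of $(T, S, G)$ together with the fact that $A$ is a function of $S$ alone, which gives $T \perp A \mid (S,G)$. By Claim \ref{clm: conditional_MVN}, $\E[T \mid S=s, G=g] = \alpha_0 + \beta g + \eta s$ is linear in $(s,g)$ with strictly positive coefficients $\beta, \eta$ (exactly the coefficients computed in the proof of Lemma \ref{lem: closed_form_bayes_posterior}), so the tower rule yields
$$ e_i(\mu_i, g) \;=\; \alpha_0 + \beta g + \eta\, J(g), \qquad J(g) := \E[S \mid G=g, A=1]. $$
Because $\beta g \to -\infty$ as $g \to -\infty$, it suffices to prove that $J(g)$ grows strictly more slowly than $(\beta/\eta)|g| = |g|/\gamma^2$ in the same limit.

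To control $J(g)$ I would apply Stein's identity to the Gaussian $S \mid G = g \sim \mathcal{N}(m_S(g), \tau_S^2)$, which yields
$$ J(g) \;=\; m_S(g) \;+\; \tau_S^2\,\frac{\E[A_i'(S) \mid G=g]}{\E[A_i(S) \mid G=g]}. $$
For the extremal threshold rule $A_i(s) = \mathbbm{1}\{s \geq \beta^*\}$, Claim \ref{clm: hazard_rate} simplifies this to $J(g) = m_S(g) + \tau_S H((\beta^*-m_S(g))/\tau_S) \to \beta^*$, a constant, so $\beta g + \eta J(g) \to -\infty$ in the extremal case. For a general monotone $A_i$, the non-zero assumption furnishes some $s_0$ with $A_i(s_0) > 0$, which lower-bounds the denominator by $A_i(s_0)\Pr[S \geq s_0 \mid G=g]$; combined with the bound $A_i \leq 1$ on the numerator, one shows that the Stein ratio grows at most linearly in $|m_S(g)|$. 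The main obstacle is making this last step tight enough in full generality --- namely, that the slope of the Stein ratio stays strictly below $1/\tau_S^2$. The cleanest route is to exploit log-concavity of $h(w) := \E[A_i(S) \mid S \sim \mathcal{N}(w, \tau_S^2)]$ (which holds under the natural regularity assumption that $A_i$ is log-concave), which forces $(\log h)'$ to be non-increasing and no worse asymptotically than in the threshold case, thereby forcing $e_i(\mu_i, g) \to -\infty$ and hence $L^- = -\infty$.
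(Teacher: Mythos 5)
Your reduction of the lemma to the two endpoint computations is fine, and your argument for $L^+=+\infty$ is correct: tilting the Gaussian $\mathcal{N}(m(g),\tau^2)$ by the non-decreasing, non-negative function $x_i$ preserves the monotone likelihood ratio in the location parameter, so the posterior mean dominates $m(g)\to+\infty$. The problem is the $L^-$ half, and you have correctly diagnosed it yourself: the Stein-identity route does not close. It requires $A_i$ to be (at least weakly) differentiable, and the final step needs the log-derivative of $h(w)=\E[A_i(S)\mid S\sim\mathcal{N}(w,\tau_S^2)]$ to grow strictly slower than $w/\tau_S^2$, which you only obtain under an added log-concavity hypothesis on $A_i$. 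The lemma assumes only that $A_i$ is monotone, non-zero, and $[0,1]$-valued; a non-decreasing step function need not be log-concave, so your argument proves a strictly weaker statement than the one claimed.

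The missing idea is that the tool you already used for $L^+$ also closes $L^-$, just applied to $S$ instead of $T$. Writing $e_i(\mu_i,g)=\alpha_0+\beta g+\eta J(g)$ with $\beta,\eta>0$ as you do, note that $J(g)=\E[S\mid G=g,A_i=1]$ is the mean of the Gaussian $S\mid G=g\sim\mathcal{N}(m_S(g),\tau_S^2)$ tilted by the \emph{fixed} non-decreasing function $A_i(\cdot)$; since $m_S(g)$ is increasing in $g$ and tilting by a fixed non-negative function preserves MLR in the location parameter, $J$ is non-decreasing in $g$ (equivalently, its derivative is a conditional variance, as in Lemma~\ref{lem: conditional_derivatives}). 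It is also finite at $g=0$ because $sA_i(s)\phi(\cdot)$ is absolutely integrable and the denominator is positive for a non-zero monotone rule. Hence for all $g<0$ one has $e_i(\mu_i,g)\le\alpha_0+\beta g+\eta J(0)\to-\infty$, with no regularity assumptions beyond monotonicity. This one-sided bound by the value at $g=0$ is exactly how the paper's proof finishes (after integrating out $t$ to express $e_i(g)$ as a linear term in $\mu_i(g)$ plus the tilted-Gaussian mean of $S$); the symmetric lower bound $e_i(\mu_i,g)\ge\alpha_0+\beta g+\eta J(0)$ for $g>0$ also re-derives your $L^+$ claim.
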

% \jz{But note that for a threshold function, $L^-$ obviously is not $-\infty$, and so $g^*_i(.)$ is not defined for $C$ small enough (it's $-\infty$). That's why we were having a problem.}
% \jz{In the proof below I believe I can in fact write $L^+ = +\infty$ for any distribution $P_i(t)$ with continuous support}

\begin{proof}
Since
\[
x_i(t) = \int_s A_i(s) \phi(s-t) dt
\]
we can write
\[
e_i(g) = \frac{\int_t \int_s t A_i(s) \phi(s-t) \phi\left(\frac{g-t}{\gamma}\right) \phi \left(\frac{\mu_i-t}{\sigma_i} \right) ds~dt}{\int_t \int_s A_i(s) \phi(s-t) \phi\left(\frac{g-t}{\gamma}\right) \phi \left(\frac{\mu_i-t}{\sigma_i} \right) ds~dt}
\]
Using Claim~\ref{clm: gaussian_product} and letting
\[
\mu_i(g) = \frac{\gamma^2 \mu_i + \sigma_i^2 g}{\sigma_i^2 + \gamma^2}
\]
which is an increasing function of $g$ whose limits are $\pm \infty$ when $g \to \pm \infty$, and
\[
\lambda_i = \frac{\gamma^2\sigma_i^2}{\gamma^2 + \sigma_i^2},
\]
we have that
\[
 \phi\left(\frac{g-t}{\gamma}\right) \phi \left(\frac{\mu_i-t}{\sigma_i} \right) = \phi \left(\frac{\mu_i(g)-t}{\lambda_i}\right) \int_t \phi\left(\frac{g-t}{\gamma}\right) \phi \left(\frac{\mu_i-t}{\sigma_i} \right) dt
\]
and
\[
e_i(g) = \frac{\int_t \int_s t A_i(s) \phi(s-t) \phi\left(\frac{\mu(g)-t}{\tilde{\lambda_i}}\right)  dt~ds}{\int_t \int_s A_i(s) \phi(s-t) \phi\left(\frac{\mu(g)-t}{\lambda_i}  \right) dt~ds}
\]

Note that
\[
\left\vert t A_i(s) \phi(s-t) \phi\left(\frac{\mu_i(g)-t}{\lambda_i}\right) \right\vert \leq |t| \phi(s-t) \phi\left(\frac{\mu_i(g)-t}{\lambda_i}\right)
\]
and
\begin{align*}
&\int_t \int_s |t| \phi(s-t) \phi\left(\frac{\mu_i(g)-t}{\lambda_i}\right) ds~dt
\\&= \int_t |t| \phi\left(\frac{\mu_i(g)-t}{\lambda_i}\right) \int_s \phi(s-t) ds~dt
\\&= \int_t |t| \phi\left(\frac{\mu_i(g)-t}{\lambda_i}\right) dt
\end{align*}
exists and is finite for any $g \in \reals$ (as the normal distribution admits a moment generating function), and the same argument holds for
\begin{align*}
\int_t \int_s \phi(s-t) \phi\left(\frac{\mu_i(g)-t}{\lambda_i}\right) ds~dt.
\end{align*}
Hence, Fubini's theorem applies, and we can write
\[
e_i(g) = \frac{\int_s A_i(s) \int_t t \phi(s-t) \phi\left(\frac{\mu_i(g)-t}{\lambda_i}\right) dt~ds}{\int_s A_i(s) \int_t \phi(s-t) \phi\left(\frac{\mu_i(g)-t}{\lambda_i}\right) dt~ds}
\]
By Claim~\ref{clm: gaussian_product},
\begin{align*}
&\int_t t \phi(s-t) \phi\left(\frac{\mu_i(g)-t}{\lambda_i}\right) dt
\\& = \left(\frac{\lambda_i^2 s + \mu_i(g)}{\lambda_i^2+1} \right) \cdot \int_t \phi(s-t) \phi\left(\frac{\mu_i(g)-t}{\lambda_i}\right) dt
\end{align*}
and the expectation can be written
\[
e_i(g) = \frac{\mu_i(g)}{\lambda_i^2+1} + \frac{\lambda_i^2}{\lambda_i^2+1} \frac{\int_s s A_i(s) \int_t \phi(s-t) \phi\left(\frac{\mu_i(g)-t}{\lambda_i}\right) dt~ds}{\int_s A_i(s) \int_t \phi(s-t) \phi\left(\frac{\mu_i(g)-t}{\lambda_i}\right) dt~ds}
\]
We further have that,
\begin{align*}
\phi(s-t) \phi\left(\frac{x-t}{a}\right) = \frac{1}{2\pi a} \exp \left(-(x-t)^2/2-(x-t)^2/2a^2 \right)
 \end{align*}
and we can write
\begin{align*}
&(s-t)^2+(x-t)^2/a^2
\\&= \frac{a^2+1}{a^2} t^2 - 2 \left( \frac{a^2 s+x}{a^2+1}\right) \frac{a^2+1}{a^2} t + \frac{x^2+a^2 s}{a^2}
\\&= \frac{a^2+1}{a^2} \left(t- \frac{a^2 s+x}{a^2+1} \right)^2 + \frac{x^2+a^2s}{a^2} - \frac{(a^2 s + x)^2}{a^2 (a^2+1)}
\\& = \frac{a^2+1}{a^2} \left(t- \frac{a^2 s+x}{a^2+1} \right)^2 + \frac{(x-s)^2}{a^2+1}
\end{align*}
Therefore,
\begin{align*}
&\int_t \phi(s-t) \phi\left(\frac{\mu_i(x)-t}{a_i}\right) dt
\\&= \frac{1}{2\pi a} \exp\left(-\frac{(x-s)^2}{2(a^2+1)} \right) \int_t \exp\left(-\frac{a^2+1}{2a^2} \left(t- \frac{a^2 s+x}{a^2+1} \right)^2 \right) dt
\\&= \frac{1}{2\pi a} \exp\left(-\frac{(x-s)^2}{2(a^2+1)} \right) \cdot \sqrt{2\pi} \frac{a}{\sqrt{a^2+1}}
\\&= \frac{1}{\sqrt{2 \pi (a^2+1)}} \exp\left(-\frac{(x-s)^2}{2(a^2+1)} \right)
\\&= \phi\left(\frac{x-s}{\sqrt{a^2+1}} \right)
\end{align*}
and
\[
e_i(g) = \frac{\mu_i(g)}{\lambda_i^2+1} + \frac{\lambda_i^2}{1+\lambda_i^2} \frac{\int_s s A_i(s) \phi\left(\frac{\mu_i(g)-s}{\sqrt{\lambda_i^2+1}} \right) ds}{\int_s A_i(s) \phi\left(\frac{\mu_i(g)-s}{\sqrt{\lambda_i^2+1}} \right) ds}
\]
Note that
\[
\int_s A_i(s) \phi\left(\frac{\mu_i(g)-s}{\sqrt{\lambda_i^2+1}} \right) ds > 0
\]
for monotone and non-zero $A_i(s)$, and
\[
A_i(s) \phi\left(\frac{\mu_i(g)-s}{\sqrt{\lambda_i^2+1}} \right),~ s A_i(s) \phi\left(\frac{\mu_i(g)-s}{\sqrt{\lambda_i^2+1}} \right)
\]
are absolutely integrable as $|A_i(s)|\leq 1$ hence integrable for all $g$, so
\[
\frac{\int_s s A_i(s) \phi\left(\frac{\mu_i(g)-s}{\sqrt{\lambda_i^2+1}} \right) ds}{\int_s A_i(s) \phi\left(\frac{\mu_i(g)-s}{\sqrt{\lambda_i^2+1}} \right) ds}
\]
is finite for all $g$, and in particular for $g=0$. By the same argument as in the proof of Lemma~\ref{lem: conditional_derivatives}, $e_i(g)$ is a non-decreasing function of $\mu_i(g)$ and hence of $g$. Therefore, for $g < 0$ we have that
\[
e_i(g) \leq \frac{\mu_i(g)}{\lambda_i^2+1} + \frac{\lambda_i^2}{1+\lambda_i^2} \frac{\int_s A_i(s) \phi\left(\frac{\mu_i(0)-s}{\sqrt{\lambda_i^2+1}} \right) ds}{\int_s A_i(s) \phi\left(\frac{\mu_i(0)-s}{\sqrt{\lambda_i^2+1}} \right) ds}
\]
which implies $L^- = -\infty$, and similarly $L^+ = +\infty$ as for $g > 0$,
\[
e_i(g) \geq \frac{\mu_i(g)}{\lambda_i^2+1} + \frac{\lambda_i^2}{1+\lambda_i^2} \frac{\int_s A_i(s) \phi\left(\frac{\mu_i(0)-s}{\sqrt{\lambda_i^2+1}} \right) ds}{\int_s A_i(s) \phi\left(\frac{\mu_i(0)-s}{\sqrt{\lambda_i^2+1}} \right) ds}
\]
\end{proof}

\end{document}